\newtheorem{rem}{\textbf{Remark}}
\newtheorem{no}{\textbf{Note}}
\newtheorem{exa}{\textbf{Example}}
\begin{document}

\pagestyle{plain}

\mainmatter

\title{Improved Upper Bounds on Systematic-Length for Linear Minimum Storage Regenerating Codes
\protect\footnote{$^1$Kun Huang and Ming Xian are with State Key Laboratory of Complex Electromagnetic Environment Effects on Electronics and Information System, National University of Defense Technology, Changsha, 410073, China \email{(khuangresearch923@gmail.com; qwertmingx@tom.com)}.\\
$^2$ Udaya Parampalli is with Department of Computing and Information Systems, University of Melbourne,  VIC 3010, Australia \email{(udaya@unimelb.edu.au)}.\\
}
}
\titlerunning{}

\author{
Kun Huang\inst{1},\ Udaya Parampalli\inst{2}  \and\ Ming Xian\inst{1}
}

\authorrunning{}

\institute{}

\maketitle

\begin{abstract}
In this paper, we revisit the problem of finding the longest systematic-length $k$ for a linear minimum storage regenerating (MSR) code with optimal repair of only systematic part, for a given per-node storage capacity $l$ and an arbitrary number of parity nodes $r$. We study the problem by following a geometric analysis of linear subspaces and operators. First, a simple quadratic bound is given, which implies that $k=r+2$ is the largest number of systematic nodes in the \emph{scalar} scenario. Second, an $r$-based-log bound is derived, which is superior to the upper bound on log-base $2$ in the prior work. Finally, an explicit upper bound depending on the value of $\frac{r^2}{l}$ is introduced, which further extends the corresponding result in the literature.

{\bf Key Words:} MSR Codes, Systematic-Length, Linear Subspaces, Upper Bounds.
\end{abstract}

\section{INTRODUCTION}

Distributed storage systems (DSSs) are efficient storage systems designed for coping with the tremendously growing data. They have become an indispensable component of many distributed applications such as cloud storage, social networking and peer to peer networking. Through introducing redundancy in the form of replication or erasure coding, DSSs provide data storage services with high reliability \cite{Re:A.G.Dimakis1}. Compared with the systems that employ replication, erasure codes achieve higher reliability for the same level of redundancy \cite{Re:H.Weatherspoon}. With regard to repair bandwidth as a classical measurement of reliable DSSs, Dimakis et al. \cite{Re:A.Dimakis} propose a novel class of codes termed \emph{regenerating codes} that are particularly advantageous in the repair efficiency.

\subsection{Regenerating Codes}

Regenerating codes \cite{Re:A.Dimakis} are a family of maximal distance separable (MDS) codes. In this framework, a $B$-sized original data file over a finite field $\mathbb{F}$ is encoded into $n\alpha$ symbols that are distributed across $n$ nodes with each node storing $\alpha$ symbols. Data symbols stored in any $k$ out of $n$ nodes are sufficient to recover the original data file, and meanwhile, any single node failure can be repaired by downloading $\beta$ symbols from each of $d$ helper nodes out of the remaining $n-1$ nodes. It is shown in \cite{Re:A.Dimakis} that regenerating codes with a parameter set $\{n,k,d,\alpha,\beta,B\}$ have to comply with the following constraint (the tradeoff curve):
\begin{equation}\label{cut}
B\leq \sum_{i=1}^{k}\min\{\alpha,(d-i+1)\beta\},
\end{equation}
where codes that achieve the above curve are called \emph{optimal} regenerating codes. This tradeoff curve is equipped with two extreme points, namely, the minimum bandwidth regenerating (MBR) point and the minimum storage regenerating (MSR) point, respectively representing optimal regenerating codes with the least repair bandwidth and ones with the least per node storage. The corresponding parameters of the two points are determined by
\begin{equation}
\left\{\begin{aligned}
&(\alpha_{\mathbf{MSR}},\beta_{\mathbf{MSR}})=(\frac{B}{k}, \frac{B}{k(d-k+1)})\\
&(\alpha_{\mathbf{MBR}},\beta_{\mathbf{MBR}})=(\frac{2dB}{k(2d-k+1)}, \frac{2B}{k(2d-k+1)}).
\end{aligned}\right.
\end{equation}

There are three repair models considered in literature: functional repair, exact repair and exact repair of systematic nodes \cite{Re:A.G.Dimakis1}. Compared to the functional repair, the exact repair is highly favorable, since exact repair can restore the exact replicas of the lost data \cite{Re:C.Huang}. In the scenario of exact repair, Shah et al. in \cite{Re:Shah.N.B} demonstrate that it is not achievable for most interior points on the tradeoff curve. For those possibly achievable interior points, there are few constructions of codes \cite{Re:C.Tian,Re:T.Ernvall}. Hereinafter, we restrict our focus to exact repair of linear MSR codes.

\subsection{Minimum Storage Regenerating Codes}

Apart from the concept of regenerating codes, array codes \cite{Re:Blaum} are extensively studied over several years and have been widely deployed in DSSs \cite{Re:K.Rashmi,Re:C.Suh,Re:Y.Wu1,Re:K.V.Rashmi,Re:Rashmi,Re:N.B.Shah,Re:Z.Wang,Re:D.S.Papailiopoulos,Re:V.R,Re:I.Tamo,Re:Z.Wang1,Re:Z.Wang2,Re:V.R.Cadambe,Re:V.R.Cadambe1,Re:G.K.Agarwal,Re:B.S,Re:Y.S.Han,Re:J.Li,Re:A.S,Re:N.Raviv,Re:M.Ye,Re:Sasidharan}.
An $\{n,k,l\}$-array code is built from an $(l\times n)$ matrix, where each column is the codeword stored in the $i$-th node. The notation $l$ known as the \emph{sub-packetization size} \cite{Re:Cadambe} basically represents the same meaning with the parameter $\alpha$ of linear regenerating codes \cite{Kun}. In this sense, as for optimal-repair MDS array codes with $\{d=n-1\}$ that reside in the MSR scenario, the corresponding parameter notations can be rewritten as $\{n,k,\alpha=l,\beta=\frac{l}{r},B=kl\}$ where $r=n-k$. Additionally, any optimal-repair systematic MDS array code obviously belongs to the scenario of MSR codes, since the systematic structure of a code signifies its property of minimum storage.

From the perspective of coding theory and technique for MSR codes, a lot of progresses have been made. Collectively, MSR codes are categorized into \emph{scalar} MSR codes with $\{\beta=1\}$ \cite{Re:K.Rashmi,Re:C.Suh,Re:Y.Wu1,Re:K.V.Rashmi,Re:Rashmi,Re:N.B.Shah} and \emph{vector} MSR codes with $\{\beta>1\}$ \cite{Re:Z.Wang,Re:D.S.Papailiopoulos,Re:V.R,Re:I.Tamo,Re:Z.Wang1,Re:Z.Wang2,Re:V.R.Cadambe,Re:V.R.Cadambe1,Re:G.K.Agarwal,Re:B.S,Re:Y.S.Han,Re:J.Li,Re:A.S,Re:N.Raviv,Re:M.Ye,Re:Sasidharan}.
When $d=n-1$, scalar MSR codes are formed by $\{l=r\}$ and the majority of these vector MSR codes are provided with $\{\beta=(n-k)^{x}\leftrightarrow l=r^{x+1}, x\geq1\}$. Most of current constructions are established on the technique of \emph{interference alignment} \cite{Re:N.B.Shah} that is necessary for constructing linear MSR codes. Moreover, it is verified in \cite{Re:N.B.Shah} that there only exist scalar linear MSR codes within the low rate regime $\{\frac{k}{n}\leq\frac{1}{2}\}$. The known product-matrix-based MSR code \cite{Re:K.Rashmi} is scalar for $\{2k-2\leq d\leq n-1\}$. Another design for scalar MSR codes with $\{d=n-1\geq 2k-1\}$ is given in \cite{Re:C.Suh} that entirely stems from interference alignment. In the high rate regime $\{\frac{k}{n}>\frac{1}{2}\}$, vector MSR codes can be leveraged to allow arbitrarily high rates to be attained. However, many of them allow optimal repair of only systematic part \cite{Re:V.R,Re:I.Tamo,Re:Z.Wang1,Re:Z.Wang2,Re:V.R.Cadambe,Re:V.R.Cadambe1,Re:G.K.Agarwal,Re:B.S,Re:Y.S.Han,Re:J.Li,Re:A.S,Re:N.Raviv} that are also referred to as systematic-repair\footnote{The two concepts of systematic MSR codes and systematic-repair MSR codes have generated differences in the definition of MSR codes. As described in \cite{Re:A.Dimakis}, the formal description of an MSR code requires all nodes be optimally repairable. But, many references in the literature also term a code as an MSR code even if it allows optimal repair of systematic nodes only. In this case, we will distinguish between two types of MSR codes by calling them as systematic-repair MSR codes and systematic MSR codes with optimal repair of all nodes (or just systematic MSR codes for simplicity) respectively. When the context is clear or the distinction is unnecessary, they both are named by MSR codes.} MSR codes for brevity, such as Permutation codes \cite{Re:V.R} and Zigzag codes \cite{Re:I.Tamo}. Vector MSR codes permitting optimal repair of parity nodes as well are proposed in \cite{Re:Z.Wang,Re:D.S.Papailiopoulos,Re:M.Ye}, where the code given in \cite{Re:Z.Wang} is a variant of Zigzag code. On the other hand, Cadambe et al. in \cite{Re:Cadambe} prove that with $l$ (or $\alpha$) scaling to infinity, there exist high-rate MSR codes with parameter set $\{n,k,d\}$ of any value. Goparaju et al. in \cite{Re:S.G} provide an non-explicit construction of systematic-repair MSR codes for all $\{k\leq d\leq n-1\}$ that can be used to meet the high-rate requirement. For $d<n-1$, authors in \cite{Re:Sasidharan} present a special $\{l=\frac{1}{2}\cdot r^{\frac{n}{r}}\}$ vector MSR code with an explicit coupled-layer construction that optimally repairs all nodes, which in fact is built on the design of \cite{Re:B.S} and also can be adjusted to have $\frac{k}{n}$ as close to $1$ as desirable high rate.

\subsection{Existing Upper Bounds on Systematic-Length}

Centering on the case of systematic-repair MSR codes parameterized by $\{n=k+r,k,l,d=n-1\}$, we revisit the problem considered in \cite{Re:I.Tamo1,Re:S.Goparaju1} ``the relationship between the node storage capacity $l$ and the number of systematic nodes $k$ for some constant $r$". To be precise, the concern is as follows: what is the longest systematic-length $k$ for which there exists a linear systematic-repair MSR code, for a given sub-packetization level $l$ and a given number of parity nodes $r$?

Tamo et al. in \cite{Re:I.Tamo1} initially bring forward this question. They first consider systematic-repair MSR codes with two properties of particular interest, i.e., \emph{optimal access} and \emph{optimal update}, where an optimal-access code transmits only the symbols it accesses and an optimal-update one updates exactly once in each parity node. Restricted to the condition of optimal update, they demonstrate that the code with optimal access has the same longest systematic-length $k$ as the one with optimal bandwidth only, where $k_{max}=\log_rl$. For instance, Zigzag codes \cite{Re:I.Tamo} are of optimal-update that can attain this value. Without this restriction, they further show that $k=r\log_rl$ is the longest systematic-length for a systematic-repair MSR code with optimal access. The scheme for optimal-access and systematic-repair MSR code that achieves this optimal value of systematic-length is initially proposed in \cite{Re:V.R.Cadambe1}. Recently, authors in \cite{Re:N.Raviv} also present an explicit construction for systematic-repair MSR codes with optimal access, where $l=r^{\frac{k}{r}}$ reaches the optimal value. Another construction of systematic-repair MSR codes with optimal access given in \cite{Re:M.Ye} requires that $l=r^{\lceil\frac{n}{r}\rceil}=r^{\lceil\frac{k}{r}\rceil+1}$, which however is unattainable and differs from $l_{min}=r^{\frac{k}{r}}$ by $r^2$ approximately.

In regard to the general situation (namely the case of non-optimal update as well as non-optimal access), Tamo et al. only provide a loose upper bound
\begin{equation}
k\leq l{{l}\choose{\frac{l}{r}}},
\end{equation}
and conjecture that $k$ is of the order of $\log_rl$. In \cite{Re:Z.Wang1} and its extended version \cite{Re:Z.Wang2}, authors design a general systematic-repair MSR code, wherein $l=r^{\frac{k}{r+1}}$ or $k=(r+1)\log_rl$ that is consistent with this conjecture. Nevertheless, it still remains open to explore better upper bounds on systematic-length $k$ and simultaneously construct tighter systematic-repair MSR codes for the general case \cite{Re:Z.Wang2}.

Aiming at the above general situation, Goparaju et al. in \cite{Re:S.Goparaju1} transform it into a linear algebraic problem involving a set of linear subspaces and operators, where \emph{repair subspaces}\footnote{The formal description of repair subspace stems from \cite{Re:S.Goparaju1}, that can be also clarified from the phrases on top of Note \ref{note1} and Equation (\ref{interference}) in this paper. Simply put, repair subspaces correspond to a set of matrices over $\mathbb{F}^{\frac{l}{r}\times l}$ that, in conjunction with stored data vectors, are used to generate repair data symbols.} are the main research target herein and the operators in fact are a series of invertible square matrices. Through operating these square matrices on repair subspaces for purpose of satisfying the conditions of subspace properties, they sequentially analyze and demonstrate the linear independence among these operators, where the subspace properties actually are established on the technique of interference alignment as shown in \cite{Re:I.Tamo1}. In particular, they mainly work on the case of $r=2$ based upon the subspace properties for any distinct $i,j\in[1,k]$ as follows:
\begin{equation}
\left\{\begin{aligned}
&\mathbf{S}_{i}\simeq\mathbf{S}_{i}\mathbf{\Phi_j}\\
&\mathbf{S}_{i}\mathbf{\Phi_i}\oplus \mathbf{S}_{i} \simeq \mathbb{F}^l,
\end{aligned}\right.
\end{equation}
where $\mathbf{S}_{i}$ is the repair subspace of the $i$-th systematic node with dimension $\frac{l}{2}$, $\mathbf{\Phi_i}$ is the corresponding operator of order $l$, $``\oplus"$ represents the direct sum and $``\simeq"$ stands for a symbol for identical space\footnote{In subsequent section, we give the extended subspace properties for the case of arbitrary number of parity nodes $r$. With these extended properties, we will then present two improved upper bounds on $k$.}. Briefly speaking, they take advantage of the restricted geometric properties of subspaces and operators, by which they construct linearly independent operators of special features and finally derive three new upper bounds on $k$ owing to the limited dimension of matrices space $\mathbb{F}^{l\times l}$ that is equal to $l^2$.

They first present a simple systematic-length bound $k\leq l^2$ that does not depend on the number of parity nodes $r$. For the special case of $2$ parity nodes, they prove that $k\leq 4l+1$. Subsequently, employing the method of \emph{partition}\footnote{The method of partition is introduced from \cite{Re:S.Goparaju1} that basically stands for dividing the systematic part $[1,k]$ in a certain way, which can be technically described as follows. To be specific, we denote the size of each partition by the notation $\lambda$, i.e., the smallest integer value such that there exist $\lambda$ number of systematic nodes $\{i_1,\cdots,i_{\lambda}\}$ satisfying $\biguplus_{\nu=1}^{\lambda} \mathbf{S}_{i_\nu}\simeq \mathbb{F}^l$ $\big($or $\dim(\biguplus_{\nu=1}^{\lambda} \mathbf{S}_{i_\nu})=l$$\big)$ for $i_\nu\in[1,k]$, where $\mathbf{S}_{i_\nu}$ is the repair subspace of the $i_\nu$-th systematic node and $``\biguplus"$ represents the sum of subspaces as defined in Note \ref{note1}. Thereupon, it signifies that every collection of $\lambda$ repair subspaces among $\{\mathbf{S}_1,\ldots,\mathbf{S}_k\}$ can span the whole space, i.e., repair subspaces within each partition sized by $\lambda$ span the entire space $\mathbb{F}^l$.} for systematic part $[1,k]$ where repair subspaces within each partition span the entire space $\mathbb{F}^l$, they design $2^{\frac{k}{\lambda}}$ linearly independent matrices and thereby derive
\begin{equation}
k\leq  2\lambda\log_2l,
\end{equation}
where $\lambda$ is the size of each partition. Applying the geometric analysis of repair subspaces, they prove that $\lambda\leq\left \lfloor\log_{\frac{r}{r-1}}l\right \rfloor+1$, which means that if the partition contains $\big(\left \lfloor\log_{\frac{r}{r-1}}l\right \rfloor+1\big)$ number of repair subspaces, they necessarily span the whole space $\mathbb{F}^l$. As a matter of fact, the second case is included herein, since it is easy to verify that $2\log_2l\big(\left \lfloor\log_{2}l\right \rfloor+1\big)< 4l+1$.

\begin{exa}\label{exa}
Consider a tiny storage disk of size that equals with $2^{8}$ bits, i.e., $l=2^{8}$. As for the case of $r=16$ parity nodes, from the latest upper bound on $k$, we have that $k\leq 2\big(\log_22^8\big)\big(\log_{\frac{16}{15}}2^8+1\big)\backsimeq1391$. With our new upper bound present in this paper, we will prove that $k$ cannot exceed $348$ (see the next section for details).
\end{exa}

\subsection{Our Contribution}

In this work, we continue to investigate the aforementioned general situation, i.e., for simplicity, seek tighter upper bounds on systematic-length $k$ for general systematic-repair MSR codes with a given $l$ and $r$. We study by means of further exploiting the geometric analysis of linear subspaces and operators applied in \cite{Re:S.Goparaju1} for the case of arbitrary number of parity nodes $r$.

By virtue of the basic principle of interference alignment and some equivalent translations, we begin with formalizing two crucial derivative extended properties of subspaces and operators: (i) it is closed for operations on repair subspaces by the addition and multiplication of encoding matrices, and (ii) any vectors coming from different linear subspaces that mutually intersect trivially are linearly independent. Using these two properties, we sequentially find the linear independence among the existing encoding matrices and further proceed to create new linearly independent matrices, which both naturally bound the systematic-length $k$ due to the dimensional restriction of square matrices.

First of all, we prove that all the encoding matrices are linearly independent, with which we obtain a simple quadratic bound
\begin{equation}
k\leq \frac{l^2-1}{r-1}.
\end{equation}
This simple bound implies that $k=r+2$ is the longest systematic-length in the scalar scenario.

Secondly, we follow the idea of partition employed in \cite{Re:S.Goparaju1}, where we construct new matrix by ``\emph{adding}" matrices of the same row within each partition. The reason for adding matrices instead of multiplying them as in \cite{Re:S.Goparaju1} is that matrix's addition possesses the commutative law while multiplication of matrix is non-commutative. Afterwards, through multiplying these constructed matrices from different partitions in sequence, we further strike out $r^{\frac{k}{\lambda}}$ new matrices, where $\lambda$ is the size of a partition. We demonstrate that these $r^{\frac{k}{\lambda}}$ new matrices are all non-zero matrices and linearly independent as well, from which we derive an $r$-based-log upper bound
\begin{equation}
k\leq 2\lambda\log_rl.
\end{equation}

Finally, we consider estimating the partition size $\lambda$, where Goparaju et al. in \cite{Re:S.Goparaju1} show that $\lambda\leq \left \lfloor\log_{\frac{r}{r-1}}l\right \rfloor+1$. Leveraging our previous work in \cite{Kun}, we introduce an explicit result as follows
\begin{equation}
\lambda:\left\{\begin{array}{ll}
=r, &\textrm{when} \quad t=r;\\
\leq t+1+\left \lfloor\log_{\frac{r}{r-1}}\frac{(r-t)l}{r}\right \rfloor, &\textrm{when} \quad 1\leq t<r,
\end{array}\right.
\end{equation}
where $t=\lceil\frac{r^2}{l}\rceil$, i.e., $t$ is an integer such that $\frac{r}{t}\leq \frac{l}{r}<\frac{r}{t-1}$ or $\frac{r^2}{l}\leq t<\frac{r^2}{l}+1$. It indicates that $\lambda=r$ if $t=r$ and $\lambda\leq\left \lfloor\log_{\frac{r}{r-1}}l\right \rfloor+1$ for $t=1$, which extend the corresponding result in \cite{Re:S.Goparaju1}. Furthermore, it can be verified that $r\leq \lambda\leq \left \lfloor\log_{\frac{r}{r-1}}l\right \rfloor+1$. To be more visible, when setting $l=2^8$ and $r=16$ as in Example \ref{exa}, we have that $t=\lceil\frac{16^2}{2^8}\rceil=1$ and therewith our $r$-based-log upper bound leads to that $k\leq 2\big(\log_{16}2^8\big)\big(\log_{\frac{16}{15}}2^8+1\big)\backsimeq348$.

\vspace{0.2cm}

Table \ref{Tab:Comparision} illustrates the research progresses on systematic-length $k$ for linear systematic-repair MSR codes\footnote{In subsequent description, construction of linear systematic-repair MSR codes has to satisfy the condition required for the interference alignment, which however aims at designing MSR codes with only optimally repairing systematic nodes. Since interference alignment does not apply to systematic MSR codes (with optimally repairing all nodes), it is evident that upper bounds on $k$ for systematic MSR codes should be tighter than the one for systematic-repair MSR codes. For the sake of clarity, we stress again that linear systematic-repair MSR codes are our research object in this paper.} including upper bounds and the scheme \footnote{Although the construction scheme by Wang et al. \cite{Re:Z.Wang2} is a general case with the longest systematic-length $k=(r+1)\log_rl$, it is obviously shorter than our new upper bound $2\lambda\log_rl$ where $r\leq \lambda\leq \left \lfloor\log_{\frac{r}{r-1}}l\right \rfloor+1$. That is to say, the longest one so far given in \cite{Re:Z.Wang2} is still not achievable.} with the longest known systematic-length \cite{Re:Z.Wang2}.

\begin{table*}[ht]
 \tabcolsep 0pt
 \begin{center}
\def\temptablewidth{0.93\textwidth}
\caption{Progresses on Systematic-Length $k$ for Linear Systematic-Repair MSR Codes.}\label{Tab:Comparision}
{\rule{\temptablewidth}{1pt}}
\begin{tabular}{|c|c|}
 ~~~~~~~Citation~      & ~Corresponding Results~          \\ \hline
  ~Tamo et al. \cite{Re:I.Tamo1}~      & $k\leq l{{l}\choose{\frac{l}{r}}}$           \\ \hline
  ~Goparaju et al. \cite{Re:S.Goparaju1}~ & $ k\leq l^2$ $\textbf{\&}$ $k\leq  2\lambda\log_2l$, where $\lambda\leq\left \lfloor\log_{\frac{r}{r-1}}l\right \rfloor+1$    \\ \hline
  ~~~~Wang et al.\cite{Re:Z.Wang2}~~&     the longest known scheme with systematic-length $k=(r+1)\log_rl$     \\ \hline
 ~~This Paper & ~ $k\leq \frac{l^2-1}{r-1}$ $\textbf{\&}$ $k\leq 2\lambda\log_rl$, where $\lambda:\left\{\begin{array}{ll}
=r, &\textrm{when} \quad t=r;\\
\leq t+1+\left \lfloor\log_{\frac{r}{r-1}}\frac{(r-t)l}{r}\right \rfloor, &\textrm{when} \quad 1\leq t<r.
\end{array}\right.$         \\ \hline
  \end{tabular}
\end{center}
\end{table*}

\subsection{Organization}

Section 2 gives preliminaries including the basic system setting, the technique of interference alignment and two useful properties related to subspaces. Section 3 presents the detailed proof of our new results for upper bounds on systematic-length. Section 4 concludes this paper.

\section{PRELIMINARIES}

In this section, we first describe the system setting about the basic construction of linear systematic-repair MSR codes. Then, the technique of interference alignment is introduced, where some equivalent transformations are applied. Last, we formalize two properties of subspaces and operators, which will be used to find new linearly independent matrices.

\subsection{System Setting}

Consider a linear systematic-repair MSR code with $\{n=k+r,k,l\}$, where each of $n$ nodes stores $l$ data symbols over a finite field $\mathbb{F}$ such that any $k$ nodes are sufficient to recover the $k\cdot l$ original data symbols. In light of its systematic feature, we let $(W_1,\cdots,W_k)$ denote these original data symbols stored in the $k$ systematic nodes. Each of the remaining $r$ nodes, as a parity node, stores a linear combination of $(W_1,\cdots,W_k)$. Formally speaking, the $l$ encoded symbols of $W_{k+u}$ stored in parity node $u$ for each $u\in[1,r]$, can be expressed as
\begin{equation}\label{encode}
W_{k+u}=\sum_{j=1}^{k}\mathbf{C}_{u,j}W_j,
\end{equation}
where $\mathbf{C}_{u,j}$ is a square encoding matrix of order $l$ corresponding to the parity node $u$ and the systematic node $j\in[1,k]$. Hence, any linear systematic-repair MSR code can be uniquely represented as follows
\begin{equation}\label{encode}
\mathbb{C}=(\mathbf{C}_{u,j})_{\{u\in[1,r],j\in[1,k]\}}=\left[
  \begin{array}{ccccc}
    \mathbf{C}_{1,1}               &\cdots  & \mathbf{C}_{1,k}        \\
    \vdots                         &\ddots  &\vdots    \\
    \mathbf{C}_{r,1}               &\cdots  &\mathbf{C}_{r,k} \\
  \end{array}
\right].
\end{equation}
The MDS property of systematic-repair MSR codes requires that any $r$ node failures can be recovered, which is equivalent to that any block submatrix of (\ref{encode}) has to be invertible. Consequently, each encoding matrix $\mathbf{C}_{u,j}$ also must be invertible.

In order to optimally repair any systematic node $i\in[1,k]$, the remaining $n-1$ nodes are required to send a fraction $\frac{l}{r}$ of their data stored, i.e., each helper node $\{\nu\neq i|\nu\in[1,n]\}$ sends $\frac{l}{r}$ repair data symbols represented by $\mathbf{S}_{i,\nu}W_\nu$, where $\mathbf{S}_{i,\nu}$ is a matrix over $\mathbb{F}^{\frac{l}{r}\times l}$. To this end, each repair data made of $\frac{l}{r}$ data symbols sent from helper node $\nu$ to systematic node $i$ can be regarded as the projection of $W_\nu$ onto a subspace of dimension $\frac{l}{r}$, which corresponds to the subspace spanned by the rows of $\mathbf{S}_{i,\nu}$ that is sequentially called repair subspace.

\begin{no}\label{note1}
To avoid confusion of the subsequent discussion, we let $``\sum"$ denote the normal sum of matrices, $``\biguplus"$ be the sum of subspaces (spanned by the rows of the corresponding matrices) and $``\bigoplus"$ represent the direct sum of certain subspaces that mutually intersect trivially. Besides, $\mathbf{S}_{i}$ and $\mathbf{S}_{i}\mathbf{C}_{u,j}$ can represent either a $(\frac{l}{r}\times l)$ matrix or a subspace of dimension $\frac{l}{r}$. Furthermore, we let $``0"$ denote the scalar zero, zero-matrix or zero-subspace, which will be clear from the context.
\end{no}

\subsection{Interference Alignment}

As shown in \cite{Re:Z.Wang1}, a linear systematic-repair MSR code has to meet the requirement of interference alignment. Optimal repair of a systematic node $i$ is possible if and only if there exist $\big\{\mathbf{S}_{i,\nu}|\nu\neq i,\nu\in[1,n]\big\}$ satisfying, for any $u\in[1,r]$ and $j\in[1,k]$ with $j\neq i$,
\begin{equation}\label{equ}
\left\{\begin{aligned}
&\mathbf{S}_{i,j}\simeq\mathbf{S}_{i,k+u}\mathbf{C}_{u,j}\\
&\biguplus_{u=1}^{r}\mathbf{S}_{i,k+u}\mathbf{C}_{u,i} \simeq \mathbb{F}^l,
\end{aligned}\right.
\end{equation}
where $``\simeq"$ means that the subspaces spanned by the rows of the corresponding matrices are identical. Besides, the sum of the subspaces $``\biguplus"$ must be a direct sum $``\bigoplus"$, since the dimension of each subspace $\{\mathbf{S}_{i,\nu}|\nu\neq i,\nu\in[1,n]\}$ is $\frac{l}{r}$ and each matrix $\mathbf{C}_{u,i}$ is invertible.

\vspace{0.2cm}

As explained in \cite{Re:I.Tamo1}, each encoding matrix for one of the parity nodes in a linear systematic-repair MSR code can be assumed to be the identity matrix $\mathbf{I}$, i.e., $\mathbf{C}_{1,i}=\mathbf{I}$ for any $i\in[1,k]$. Thereafter, the authors prove that if there exists a $\{k+r+1,k+1,l\}$ linear systematic-repair MSR code, then there also exists a $\{k+r,k,l\}$ linear systematic-repair MSR code where the repair subspaces are independent of the helper nodes. It indicates that, for a given $i\in[1,k]$, $\mathbf{S}_{i,\nu}$ stays unchanged for any $\nu\in[1,n]$ and $\nu\neq i$, which can be replaced by $\mathbf{S}_{i}$ accordingly.

\vspace{0.2cm}

Henceforth, the subspace conditions (\ref{equ}) almost can be equivalently transformed into:
\begin{equation}\label{interference}
\left\{\begin{aligned}
&\mathbf{S}_{i}\simeq\mathbf{S}_{i}\mathbf{C}_{u,j}\\
&\bigoplus_{u=2}^{r}\mathbf{S}_{i}\mathbf{C}_{u,i}\oplus \mathbf{S}_{i} \simeq \mathbb{F}^l,
\end{aligned}\right.
\end{equation}
where $i,j\in[1,k]$ with $j\neq i$ and $\mathbf{S}_{i}$ is the repair matrix (or viewed as the independent repair subspace) for systematic node $i$.

\subsection{Extended Properties of Subspaces}

From the subspace conditions (\ref{interference}), two derivative extended properties are formally presented as follows, which can be counted as a generalization of some results derived in \cite{Re:S.Goparaju1}.

\begin{lemma}\label{premise1}
Given an $\{n=k+r,k,l\}$ linear systematic-repair MSR code with independent repair subspaces, for any $u_1,u_2\in[1,r]$ and any $i,j_1,j_2\in[1,k]$ such that $j_1\neq i$ and $j_2\neq i$, it must be that
\begin{equation}
\left\{\begin{aligned}
&\mathbf{S}_{i}(\mathbf{C}_{u_1,j_1}+\mathbf{C}_{u_2,j_2})\preceq\mathbf{S}_{i};\\
&\mathbf{S}_{i}\mathbf{C}_{u_1,j_1}\mathbf{C}_{u_2,j_2}\preceq\mathbf{S}_{i},
\end{aligned}\right.
\end{equation}
where $``\preceq"$ represents a symbol of inclusion for subspaces ($x\preceq y$ means that the subspace $x$ lies in the subspace $y$).
\end{lemma}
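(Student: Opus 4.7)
The plan is to derive both inclusions directly from the first line of the interference alignment conditions in (\ref{interference}), namely $\mathbf{S}_{i}\simeq\mathbf{S}_{i}\mathbf{C}_{u,j}$ for every $u\in[1,r]$ and every $j\in[1,k]\setminus\{i\}$. The content of this relation is that right-multiplication by any such encoding matrix $\mathbf{C}_{u,j}$ is a (right) action that sends the repair subspace $\mathbf{S}_{i}$ into itself as a set, i.e., $\mathbf{S}_{i}$ is an invariant subspace under each operator $\mathbf{C}_{u,j}$ with $j\neq i$. Both claims of the lemma are then simple consequences of combining this invariance with the fact that $\mathbf{S}_{i}$, being a linear subspace of $\mathbb{F}^{l}$, is closed under vector addition and under further right-multiplication by any matrix that preserves it.

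More concretely, for the first statement I would pick an arbitrary row vector $v\in\mathbf{S}_{i}$. Applying the invariance for the pair $(u_1,j_1)$ gives $v\mathbf{C}_{u_1,j_1}\in\mathbf{S}_{i}\mathbf{C}_{u_1,j_1}\simeq\mathbf{S}_{i}$, and the same applied to $(u_2,j_2)$ gives $v\mathbf{C}_{u_2,j_2}\in\mathbf{S}_{i}$. Since $\mathbf{S}_{i}$ is a linear subspace, the sum $v(\mathbf{C}_{u_1,j_1}+\mathbf{C}_{u_2,j_2})=v\mathbf{C}_{u_1,j_1}+v\mathbf{C}_{u_2,j_2}$ lies in $\mathbf{S}_{i}$ as well. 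Ranging $v$ over a basis of $\mathbf{S}_{i}$ then yields $\mathbf{S}_{i}(\mathbf{C}_{u_1,j_1}+\mathbf{C}_{u_2,j_2})\preceq\mathbf{S}_{i}$.

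For the second statement I would apply the invariance twice in sequence: starting again from an arbitrary $v\in\mathbf{S}_{i}$, the condition $j_1\neq i$ gives $v\mathbf{C}_{u_1,j_1}\in\mathbf{S}_{i}$; letting $w:=v\mathbf{C}_{u_1,j_1}$ and applying the condition $j_2\neq i$ then yields $w\mathbf{C}_{u_2,j_2}\in\mathbf{S}_{i}$, i.e., $v\mathbf{C}_{u_1,j_1}\mathbf{C}_{u_2,j_2}\in\mathbf{S}_{i}$. This produces $\mathbf{S}_{i}\mathbf{C}_{u_1,j_1}\mathbf{C}_{u_2,j_2}\preceq\mathbf{S}_{i}$.

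There is no real obstacle here; the only thing to be careful about is that the invariance relation used is only available when the column index is different from $i$, which is exactly why the hypotheses $j_1\neq i$ and $j_2\neq i$ are imposed. Note also that the containments are actually equalities of subspaces because each $\mathbf{C}_{u,j}$ is invertible and $\mathbf{S}_{i}$ has fixed dimension $l/r$, but the weaker statement $\preceq$ is all that is needed in the sequel. The lemma therefore reads as a ``closure'' property: the set of linear operators that preserve the repair subspace $\mathbf{S}_{i}$ is closed under addition and composition, which is precisely what will be used later to build new linearly independent matrices out of the given encoding matrices.
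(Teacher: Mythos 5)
Your proof is correct and takes essentially the same route as the paper: both arguments rest solely on the first interference-alignment condition $\mathbf{S}_{i}\simeq\mathbf{S}_{i}\mathbf{C}_{u,j}$ for $j\neq i$ together with closure of a subspace under addition and repeated application of operators that preserve it, the paper merely phrasing this as the existence of an invertible $(\frac{l}{r}\times\frac{l}{r})$ matrix $\mathbf{P}_{u,i,j}$ with $\mathbf{S}_{i}\mathbf{C}_{u,j}=\mathbf{P}_{u,i,j}\mathbf{S}_{i}$ while you argue vector-by-vector. One correction to your closing aside: the containment for the \emph{sum} need not be an equality, since $\mathbf{C}_{u_1,j_1}+\mathbf{C}_{u_2,j_2}$ (equivalently $\mathbf{P}_{u_1,i,j_1}+\mathbf{P}_{u_2,i,j_2}$) can be singular even though each summand is invertible, so $\mathbf{S}_{i}(\mathbf{C}_{u_1,j_1}+\mathbf{C}_{u_2,j_2})$ may have dimension strictly smaller than $\frac{l}{r}$; only the product case yields genuine equality, and the paper accordingly claims only $\preceq$ for the sum.
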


\begin{proof}

According to the first item in (\ref{interference}), $``\mathbf{S}_{i}\simeq\mathbf{S}_{i}\mathbf{C}_{u,j}"$ signifies that the subspace spanned by the $\frac{l}{r}$ row vectors of $\mathbf{S}_{i}$ is same as the one spanned by the $\frac{l}{r}$ row vectors of $\mathbf{S}_{i}\mathbf{C}_{u,j}$. Equivalently speaking, there exists an invertible $(\frac{l}{r}\times \frac{l}{r})$ matrix $\mathbf{P}_{u,i,j}$ such that $\mathbf{S}_{i}\mathbf{C}_{u,j}=\mathbf{P}_{u,i,j}\mathbf{S}_{i}$.

\vspace{0.2cm}

Thereby, we have
\begin{equation}
\left\{\begin{aligned}
&\mathbf{S}_{i}(\mathbf{C}_{u_1,j_1}+\mathbf{C}_{u_2,j_2})=(\mathbf{P}_{u_1,i,j_1}+\mathbf{P}_{u_2,i,j_2})\mathbf{S}_{i};\\
&\mathbf{S}_{i}\mathbf{C}_{u_1,j_1}\mathbf{C}_{u_2,j_2}=\mathbf{P}_{u_1,i,j_1}\mathbf{P}_{u_2,i,j_2}\mathbf{S}_{i}.
\end{aligned}\right.
\end{equation}
From the above equation, it is clear that $\mathbf{S}_{i}\mathbf{C}_{u_1,j_1}\mathbf{C}_{u_2,j_2}\simeq\mathbf{S}_{i}$, since $\mathbf{P}_{u_1,i,j_1}\mathbf{P}_{u_2,i,j_2}$ is an invertible matrix. Moreover, because $\mathbf{P}_{u_1,i,j_1}+\mathbf{P}_{u_2,i,j_2}$ also is a $(\frac{l}{r}\times \frac{l}{r})$ matrix, it apparently leads to that $(\mathbf{P}_{u_1,i,j_1}+\mathbf{P}_{u_2,i,j_2})\mathbf{S}_{i}\preceq\mathbf{S}_{i}$.$\hfill\blacksquare$

\end{proof}

\begin{rem}
Lemma \ref{premise1} indicates the closure of operations on $\mathbf{S}_{i}$ by the addition and multiplication of $\mathbf{C}_{u,j}$ for any $j\neq i$ and $u\in[2,r]$. Consequently, any subspace generated from operating on $\mathbf{S}_{i}$ by a matrix made up of arbitrary times of addition and multiplication of matrices $\mathbf{C}_{u,j}$ still lies in $\mathbf{S}_{i}$, for $j\neq i$.
\end{rem}

\begin{lemma}\label{premise2}
Given an $\{n=k+r,k,l\}$ linear systematic-repair MSR code with independent repair subspaces, for any $i\in[1,k]$, if there exist $r$ matrices $\{\mathbf{\Theta}_1,\cdots,\mathbf{\Theta}_r\}$ of size $(\frac{l}{r}\times l)$ such that $\mathbf{\Theta}_u\preceq\mathbf{S}_{i}\mathbf{C}_{u,i}$ for each $u\in[1,r]$ and $\sum_{u=1}^r\mathbf{\Theta}_u=0$, then it must be that $$\mathbf{\Theta}_u=0.$$
\end{lemma}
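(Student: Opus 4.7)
The plan is to invoke the direct-sum decomposition of $\mathbb{F}^l$ supplied by the second line of the interference-alignment conditions~(\ref{interference}). Since it was already agreed that $\mathbf{C}_{1,i}=\mathbf{I}$ and hence $\mathbf{S}_i\mathbf{C}_{1,i}\simeq\mathbf{S}_i$, that line rewrites cleanly as
$$\bigoplus_{u=1}^{r}\mathbf{S}_i\mathbf{C}_{u,i}\simeq\mathbb{F}^l,$$
so the $r$ subspaces $\mathbf{S}_i\mathbf{C}_{u,i}$, each of dimension $l/r$, carve $\mathbb{F}^l$ into a direct sum.

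Next I would pass from a matrix-level identity to a vector-level one by reading every statement row-by-row. The containment $\mathbf{\Theta}_u\preceq\mathbf{S}_i\mathbf{C}_{u,i}$ says that each of the $l/r$ rows of $\mathbf{\Theta}_u$ is a vector lying in the subspace $\mathbf{S}_i\mathbf{C}_{u,i}$. The hypothesis $\sum_{u=1}^r\mathbf{\Theta}_u=0$ then says that, for every row index $v\in[1,l/r]$, the sum $\sum_{u=1}^r(\mathbf{\Theta}_u)_v$ of the corresponding $r$ row vectors equals the zero vector of $\mathbb{F}^l$.

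Finally I would apply the defining property of a direct sum: the zero vector of $\mathbb{F}^l$ has a \emph{unique} decomposition as $\sum_u x_u$ with $x_u\in\mathbf{S}_i\mathbf{C}_{u,i}$, namely the all-zero decomposition. Therefore $(\mathbf{\Theta}_u)_v=0$ for every $u\in[1,r]$ and every row $v\in[1,l/r]$, which forces $\mathbf{\Theta}_u=0$ as matrices. I do not expect any genuine obstacle here; the whole content of the lemma is that the direct-sum decomposition of $\mathbb{F}^l$ by the subspaces $\mathbf{S}_i\mathbf{C}_{u,i}$ forbids any non-trivial cancellation across its summands, and the matrix version follows at once by applying this fact row-wise.
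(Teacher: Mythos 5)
Your proposal is correct and follows essentially the same route as the paper: both arguments rest on the fact that $\bigoplus_{u=1}^{r}\mathbf{S}_i\mathbf{C}_{u,i}\simeq\mathbb{F}^l$ is a direct sum and then apply it row-by-row to the identity $\sum_{u=1}^{r}\mathbf{\Theta}_u=0$. Your phrasing via the uniqueness of the decomposition of the zero vector is a slightly cleaner rendering of the paper's contradiction argument about linear independence of vectors drawn from the distinct summands, but the content is the same.
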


\begin{proof}
According to the second item in (\ref{interference}), $``\bigoplus_{u=2}^{r}\mathbf{S}_{i}\mathbf{C}_{u,i}\oplus \mathbf{S}_{i} \simeq \mathbb{F}^l"$ indicates that all the $r$ subspaces $\{\mathbf{S}_{i}\mathbf{C}_{u,i}|u\in[1,r]\}$ where $\mathbf{C}_{1,i}=\mathbf{I}$, exactly span the entire space $\mathbb{F}^l$. That is to say, the total $l$ row vectors of $\{\mathbf{S}_{i}\mathbf{C}_{u,i}|u\in[1,r]\}$ are linearly independent. Thus, it leads to that any $r$ non-zero vectors coming from $r$ different subspaces $\mathbf{S}_{i}\mathbf{C}_{u,i}$ for $u\in[1,r]$ are also linearly independent.

\vspace{0.2cm}

Hereby, we assume there are non-zero matrices $\mathbf{\Theta}_u$ for some $u\in[1,r]$. However, the condition $\sum_{u=1}^r\mathbf{\Theta}_u=0$ implies that all non-zero vectors of the same row within these non-zero matrices $\mathbf{\Theta}_u$ are not linearly independent. Since these non-zero vectors lie in the corresponding subspaces $\mathbf{S}_{i}\mathbf{C}_{u,i}$, contradiction arises.$\hfill\blacksquare$

\end{proof}

\begin{rem}
Lemma \ref{premise2} in fact is a derivative property of direct sum of linear subspaces, which means that those vectors from different linear subspaces that mutually intersect trivially are linearly independent.
\end{rem}

Subsequently, we use Lemma \ref{premise1} and Lemma \ref{premise2} to find new linearly independent matrices, which will naturally offer the upper bounds on the systematic-length $k$.

\section{IMPROVED SYSTEMATIC-LENGTH UPPER BOUNDS}

In this section, we will present two new upper bounds on systematic-length. The first one is a simple quadratic bound, which stems from the linear independence of encoding matrices. The second one is an $r$-based-log upper bound, which is based on the partition of systematic nodes $[1,k]$, where repair subspaces within each standard partition span the entire space $\mathbb{F}^l$. Furthermore, we estimate the size of the standard partition and introduce an explicit result depending on the value of $\frac{r^2}{l}$, which can be integrated into the $r$-based-log bound.

\subsection{A Simple Quadratic Bound}
\begin{theorem}\label{scalar}
For any $\{n=k+r,k,l\}$ linear systematic-repair MSR code with independent repair subspaces, the following upper bound holds
\begin{equation}
k\leq \frac{l^2-1}{r-1}.
\end{equation}

\end{theorem}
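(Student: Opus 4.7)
The plan is to deduce the bound from a single counting argument: under the convention $\mathbf{C}_{1,j}=\mathbf{I}$ for all $j\in[1,k]$, the collection
\[
\mathcal{M}\;=\;\{\mathbf{I}\}\;\cup\;\{\mathbf{C}_{u,j}:2\le u\le r,\;1\le j\le k\}
\]
has cardinality $1+(r-1)k$, and I will show that $\mathcal{M}$ is linearly independent as a subset of $\mathbb{F}^{l\times l}$. Since $\dim_\mathbb{F}\mathbb{F}^{l\times l}=l^2$, this forces $1+(r-1)k\le l^2$, i.e.\ $k\le\frac{l^2-1}{r-1}$.

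To establish the linear independence, I will suppose that
\[
\alpha_0\mathbf{I}\;+\;\sum_{u=2}^{r}\sum_{j=1}^{k}\alpha_{u,j}\mathbf{C}_{u,j}\;=\;0
\]
for some scalars $\alpha_0,\alpha_{u,j}\in\mathbb{F}$, fix an arbitrary systematic index $i\in[1,k]$, and left-multiply the identity by the repair matrix $\mathbf{S}_i$. The resulting vanishing sum I will split according to whether the column index equals $i$ or not:
\[
\underbrace{\alpha_0\mathbf{S}_i\;+\;\sum_{u=2}^{r}\sum_{j\ne i}\alpha_{u,j}\mathbf{S}_i\mathbf{C}_{u,j}}_{\mathbf{\Theta}_1}\;+\;\sum_{u=2}^{r}\underbrace{\alpha_{u,i}\mathbf{S}_i\mathbf{C}_{u,i}}_{\mathbf{\Theta}_u}\;=\;0.
\]

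The bundling above is chosen precisely so Lemma~\ref{premise1} and Lemma~\ref{premise2} apply. By Lemma~\ref{premise1} each summand of $\mathbf{\Theta}_1$ lies inside $\mathbf{S}_i=\mathbf{S}_i\mathbf{C}_{1,i}$, so $\mathbf{\Theta}_1\preceq\mathbf{S}_i\mathbf{C}_{1,i}$, while $\mathbf{\Theta}_u\preceq\mathbf{S}_i\mathbf{C}_{u,i}$ holds trivially for $u\ge 2$. Lemma~\ref{premise2} then yields $\mathbf{\Theta}_u=0$ for every $u\in[1,r]$. Since $\mathbf{S}_i\mathbf{C}_{u,i}$ is a nonzero subspace, $\mathbf{\Theta}_u=\alpha_{u,i}\mathbf{S}_i\mathbf{C}_{u,i}=0$ implies $\alpha_{u,i}=0$ for every $u\in[2,r]$. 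Letting $i$ range over $[1,k]$ kills every coefficient $\alpha_{u,j}$ with $u\ge 2$, and the original relation collapses to $\alpha_0\mathbf{I}=0$, hence $\alpha_0=0$.

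The only nontrivial step is choosing the right bundling before invoking Lemma~\ref{premise2}: the identity term together with all off-diagonal contributions ($j\neq i$) must be absorbed into the single piece corresponding to the $u=1$ direct summand $\mathbf{S}_i$, so that the $r-1$ remaining directions $\mathbf{S}_i\mathbf{C}_{u,i}$ isolate the coefficients $\alpha_{u,i}$ cleanly. Once this grouping is fixed, the argument is automatic, and the quadratic bound $k\le(l^2-1)/(r-1)$ follows. The claim about the scalar regime $l=r$ then drops out by substitution: $k\le(r^2-1)/(r-1)=r+1$, so at most $k=r+2$ systematic nodes can coexist with $r$ parity nodes, matching the known product-matrix construction.
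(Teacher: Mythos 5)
Your proof is correct and follows essentially the same route as the paper: the same family of $1+(r-1)k$ matrices, the same left-multiplication by $\mathbf{S}_i$, and the same bundling of the identity and off-diagonal ($j\neq i$) terms into the $u=1$ direct summand before invoking Lemma~\ref{premise1} and Lemma~\ref{premise2}. Only your closing aside is imprecise: the bound itself gives $k\le r+1$ when $l=r$, and the paper obtains $k\le r+2$ not as a consequence of that inequality but by dropping the assumption that repair subspaces are independent of the helper nodes.
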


\begin{proof}

We proceed to show that the $k(r-1)+1$ number of matrices $\Big\{\mathbf{I},\mathbf{C}_{u,i}|u\in[2,r],i\in[1,k]\Big\}$ are linearly independent.

\vspace{0.2cm}

Assume there exist $k(r-1)+1$ coefficients $a_{u,i}$ and $b$ in $\mathbb{F}$ such that
\begin{equation}\label{nondependent}
\sum_{i=1}^{k}\sum_{u=2}^{r} a_{u,i}\mathbf{C}_{u,i}+b\mathbf{I}=0,
\end{equation}
where $``0"$ represents the zero-matrix.

\vspace{0.2cm}

Then, operating the above equation on $\mathbf{S}_{i}$ for each $i\in [1,k]$, we have
\begin{equation}\label{independent matrix}
\sum_{u=2}^{r} a_{u,i}\mathbf{S}_{i}\mathbf{C}_{u,i}+(\sum_{j\neq i}\sum_{u=2}^{r}a_{u,j}\mathbf{S}_{i}\mathbf{C}_{u,j}+b\mathbf{S}_{i})=0,
\end{equation}
where Lemma \ref{premise1} leads to that $(\sum_{j\neq i}\sum_{u=2}^{r}a_{u,j}\mathbf{S}_{i}\mathbf{C}_{u,j}+b\mathbf{S}_{i})\preceq\mathbf{S}_{i}$. With Lemma \ref{premise2}, we know that, for any $u\in[2,r]$ and $i\in[1,k]$,
$$a_{u,i}=0.$$
Combining with equation (\ref{nondependent}), we further derive
$$b=0,$$
which exactly means that the $k(r-1)+1$ matrices $\Big\{\mathbf{I},\mathbf{C}_{u,i}|u\in[2,r],i\in[1,k]\Big\}$ are linearly independent\footnote{As for an arbitrary MDS code, the linear independence among the matrices $\Big\{\mathbf{I},\mathbf{C}_{u,i}|u\in[2,r],i\in[1,k]\Big\}$ cannot always hold. The reason is that unlike systematic-repair MSR code, an arbitrary MDS code is not necessarily equipped with the property of interference alignment and as a consequence cannot meet the conditions required in Lemma \ref{premise1} and Lemma \ref{premise2}.}.

\vspace{0.2cm}

Since they all lie in the $l^2$-dimensional space of matrices $\mathbb{F}^{l\times l}$, we have
\begin{equation}
k(r-1)+1\leq l^2\Leftrightarrow k\leq \frac{l^2-1}{r-1}.
\end{equation}$\hfill\blacksquare$
\end{proof}

\begin{rem}
In \cite{Re:S.Goparaju1}, Goparaju et al. obtain that $k\leq l^2$ under the case of two parity nodes. Here, we look into the case of arbitrary number of parity nodes $r$ and derive that $k\leq \frac{l^2-1}{r-1}$.

In addition, we find that when $l=r$, it has to be that $k\leq r+1$. Without the premise that the repair subspaces are independent of the helper nodes, we further have $k\leq r+2$ that is exactly consistent with the corresponding result in \cite{Re:N.B.Shah}, i.e., scalar linear systematic-repair MSR codes only exist when $d\geq 2k-3$ for $d=n-1$. That is to say, in the scalar linear systematic-repair MSR scenario, $k=r+2$ is the longest systematic-length.
\end{rem}

\subsection{An $r$-Based-Log Bound}
\begin{theorem}\label{log}
For any $\{n=k+r,k,l\}$ linear systematic-repair MSR code with independent repair subspaces, the following upper bound holds
\begin{equation}
k\leq 2\lambda\log_rl,
\end{equation}
where $\lambda$ as defined in Footnote 4 represents the size of each partition.
\end{theorem}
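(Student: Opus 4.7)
The plan is to construct $r^{k/\lambda}$ linearly independent matrices inside the $l^2$-dimensional space $\mathbb{F}^{l\times l}$, after which $r^{k/\lambda}\le l^2$ at once yields $k\le 2\lambda\log_r l$. Fix a partition of $[1,k]$ into $k/\lambda$ blocks $P_1,\ldots,P_{k/\lambda}$ of size $\lambda$ whose repair subspaces span $\mathbb{F}^l$ within each block (the content of the $\lambda$-definition). Inside each block define the row sums
\[
\mathbf{M}_u^{(\nu)}\,:=\,\sum_{j\in P_\nu}\mathbf{C}_{u,j},\qquad u\in[1,r],
\]
so that $\mathbf{M}_1^{(\nu)}=\lambda\mathbf{I}$ since $\mathbf{C}_{1,j}=\mathbf{I}$, and then form the cross-block products $\mathbf{N}_{\mathbf{u}}:=\mathbf{M}_{u_1}^{(1)}\cdots\mathbf{M}_{u_{k/\lambda}}^{(k/\lambda)}$ indexed by $\mathbf{u}\in[1,r]^{k/\lambda}$. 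Using addition within a block (the key change from \cite{Re:S.Goparaju1}) keeps the construction tractable: addition commutes, and the sum-over-$j$ structure lets me later isolate the contribution $j=i$ cleanly when I test against $\mathbf{S}_i$.

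I would prove linear independence by induction on the number of blocks $k/\lambda$. Suppose $\sum_{\mathbf{u}}c_{\mathbf{u}}\mathbf{N}_{\mathbf{u}}=0$ and pick any $i\in P_{k/\lambda}$. Left-multiplying by $\mathbf{S}_i$, every prefix factor $\mathbf{M}_{u_\nu}^{(\nu)}$ with $\nu<k/\lambda$ involves only $\mathbf{C}_{u,j}$ with $j\ne i$, so iterating Lemma \ref{premise1} gives $\mathbf{S}_i\prod_{\nu<k/\lambda}\mathbf{M}_{u_\nu}^{(\nu)}=\mathbf{A}(\mathbf{u}',i)\,\mathbf{S}_i$ for some $(l/r)\times(l/r)$ matrix $\mathbf{A}(\mathbf{u}',i)$, with $\mathbf{u}':=(u_1,\ldots,u_{k/\lambda-1})$. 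The final factor splits as $\mathbf{S}_i\mathbf{M}_{u_{k/\lambda}}^{(k/\lambda)}=\mathbf{S}_i\mathbf{C}_{u_{k/\lambda},i}+(\text{residual in }\mathbf{S}_i)$, the first piece coming from $j=i$ and the second from $j\ne i$. Since $\{\mathbf{S}_i\mathbf{C}_{u,i}\}_{u=1}^{r}$ sit in direct sum, Lemma \ref{premise2} isolates, for each $u^*\in[2,r]$, the component in $\mathbf{S}_i\mathbf{C}_{u^*,i}$ and delivers
\[
\mathbf{S}_i\,\Big(\sum_{\mathbf{u}'}c_{\mathbf{u}',u^*}\prod_{\nu<k/\lambda}\mathbf{M}_{u_\nu}^{(\nu)}\Big)\;=\;0.
\]

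Because this holds for every $i\in P_{k/\lambda}$ and those $\lambda$ repair subspaces span $\mathbb{F}^l$, the matrix inside the parentheses must itself vanish, so the inductive hypothesis (one fewer block) forces $c_{\mathbf{u}',u^*}=0$ for every $\mathbf{u}'$ and every $u^*\ge 2$. The leftover $u^*=1$ terms are handled analogously: $\mathbf{M}_1^{(k/\lambda)}=\lambda\mathbf{I}$ is a scalar factor, so the residual identity reduces (modulo the invertible scalar $\lambda$) to $\sum_{\mathbf{u}'}c_{\mathbf{u}',1}\prod_{\nu<k/\lambda}\mathbf{M}_{u_\nu}^{(\nu)}=0$, to which induction again applies; the base case $k/\lambda=1$ is the same direct-sum separation applied to a single block. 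I expect the main obstacle to be precisely the step that promotes ``$\mathbf{S}_i X=0$ for every $i\in P_{k/\lambda}$'' to the abstract identity ``$X=0$ in $\mathbb{F}^{l\times l}$'': the prefix $\mathbf{A}(\mathbf{u}',i)$ genuinely depends on $i$, so no single choice of $i$ can finish the job, and it is exactly the partition-spanning property built into the very definition of $\lambda$ that bridges the per-subspace identity and the abstract matrix identity required by the recursion.
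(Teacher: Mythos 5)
Your construction and argument are essentially the paper's own proof: block-wise \emph{addition} of the encoding matrices followed by cross-block multiplication, induction on the number of blocks, Lemma \ref{premise1} to push prefixes inside $\mathbf{S}_i$, Lemma \ref{premise2} to separate the $j=i$ component of the last block, and the spanning property $\biguplus_{i\in P_{k/\lambda}}\mathbf{S}_i\simeq\mathbb{F}^l$ to promote $\mathbf{S}_iX=0$ to $X=0$. That last promotion step, which you flag as the main obstacle, is exactly how the paper closes the recursion, so the skeleton is sound.

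There is, however, one concrete defect. You set $\mathbf{M}_1^{(\nu)}:=\sum_{j\in P_\nu}\mathbf{C}_{1,j}=\lambda\mathbf{I}$ and later treat $\lambda$ as ``the invertible scalar $\lambda$.'' Over a finite field $\mathbb{F}$ this is false whenever $\mathrm{char}(\mathbb{F})$ divides $\lambda$ (e.g.\ $\mathbb{F}=\mathbb{F}_2$ and $\lambda$ even): then $\mathbf{M}_1^{(\nu)}=0$, every product $\mathbf{N}_{\mathbf{u}}$ with some $u_\nu=1$ is the zero matrix, and the family of $r^{k/\lambda}$ matrices is trivially linearly dependent, so the count $r^{k/\lambda}\leq l^2$ cannot be extracted. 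The paper avoids this by \emph{defining} $\mathbf{\Gamma}_{\nu,1}:=\mathbf{I}$ rather than the block sum; with that one-line change your argument goes through verbatim (the prefix closure $\mathbf{S}_i\mathbf{I}=\mathbf{S}_i$ and the $u^*=1$ branch of the induction both still work). Two smaller omissions: you implicitly assume $\lambda\mid k$ — when it does not, one block has size $k-\lfloor k/\lambda\rfloor\lambda<\lambda$, the base case still yields $r$ independent matrices for that block, and $r^{\lceil k/\lambda\rceil}\leq l^2$ gives $k<2\lambda\log_rl$ (the paper treats this as a separate case); and you should state that the induction's base case already requires $\mathbf{S}_i$ for $i$ ranging over the whole first block, not a single $i$, since one subspace alone does not force a matrix to vanish.
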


\begin{proof}
We follow the idea of partition considered in \cite{Re:S.Goparaju1}, while we employ the addition of matrices within each partition instead of matrices' multiplication therein. The basic difference is that matrix's addition has the commutative law while multiplication of matrix is non-commutative. We analyze two situations as follows.
\vspace{0.2cm}

\textbf{1. The situation when $\lambda$ divides $k$.}
\vspace{0.2cm}

Divide the indices of systematic nodes $[1,k]$ into $\frac{k}{\lambda}$ disjoint partitions $\{\mathcal{X}_1,\mathcal{X}_2,\cdots,\mathcal{X}_{\frac{k}{\lambda}}\}$, where $\mathcal{X}_i=[1+(i-1)\lambda,\cdots,i\lambda]$. Accordingly, partition $\{\mathbf{S}_1,\cdots,\mathbf{S}_{k}\}$ and $\Big\{\mathbf{I},\mathbf{C}_{u,i}|u\in[2,r],i\in[1,k]\Big\}$ as follows
\begin{equation}
\left[
  \begin{array}{ccc;{3pt/3pt}ccc;{3pt/3pt}c;{3pt/3pt}ccc}
   \quad \mathbf{S}_1   &\cdots  & \mathbf{S}_{\lambda}\quad  & \quad \mathbf{S}_{\lambda+1} & \cdots  & \mathbf{S}_{2\lambda} \quad & \quad \cdots \quad & \quad\mathbf{S}_{k-\lambda+1}  &\cdots &\mathbf{S}_{k}\quad     \\ \hline
  \quad  \mathbf{I} & \cdots & \mathbf{I} \quad& \quad \mathbf{I} & \cdots & \mathbf{I} \quad &\quad\cdots\quad &\quad\mathbf{I} &\cdots & \mathbf{I}\quad \\ \hdashline[2pt/2pt]
  \quad  \mathbf{C}_{2,1} & \cdots & \mathbf{C}_{2,\lambda} \quad & \quad \mathbf{C}_{2,\lambda+1} & \cdots & \mathbf{C}_{2,2\lambda} \quad &\quad\cdots\quad &\quad\mathbf{C}_{2,k-\lambda+1} &\cdots & \mathbf{C}_{2,k}\quad \\\hdashline[2pt/2pt]
   \quad  \vdots & \cdots & \vdots \quad & \quad \vdots & \cdots & \vdots\quad &\quad\cdots\quad &\quad\vdots &\cdots & \vdots \quad\\\hdashline[2pt/2pt]
   \quad \mathbf{C}_{r,1} & \cdots & \mathbf{C}_{r,\lambda} \quad & \quad \mathbf{C}_{r,\lambda+1} & \cdots & \mathbf{C}_{r,2\lambda}\quad &\quad\cdots\quad &\quad\mathbf{C}_{r,k-\lambda+1} &\cdots & \mathbf{C}_{r,k} \quad\\
  \end{array}
\right],
\end{equation}
where all entries of the second row are set to be the identity matrix $\mathbf{I}$.
\vspace{0.2cm}

\textbf{Setting}: For each $u\in [1,r]$ and $i\in[1,\frac{k}{\lambda}]$, define
\begin{equation}
\mathbf{\Gamma}_{i,u}=\left\{\begin{aligned}
&\mathbf{I},~~ if ~~ u=1;\\
&\sum_{j\in \mathcal{X}_i}\mathbf{C}_{u,j}, ~~ if~~ u\in[2,r].
\end{aligned}\right.
\end{equation}

Then, for any $u_i\in[1,r]$ and $i\in[1,\frac{k}{\lambda}]$, we define
$$\mathbf{\Delta}_{u_1u_2\cdots u_{\frac{k}{\lambda}}}=\prod_{i=1}^{\frac{k}{\lambda}}\mathbf{\Gamma}_{i,u_i}.$$
Thus, we totally obtain $r^{\frac{k}{\lambda}}$ square matrices formed as $\mathbf{\Delta}_{u_1u_2\cdots u_{\frac{k}{\lambda}}}$ for $u_i\in[1,r]$, which in fact all are non-zero matrices (It is proved in \textbf{Appendix}). In the following, we further show that these $r^{\frac{k}{\lambda}}$ square matrices are also linearly independent.

\vspace{0.2cm}

\textbf{Induction claim}: Assume for some $s\in [1,\frac{k}{\lambda}]$, the following $r^s$ number of square matrices
$$\left\{\mathbf{\Delta}_{u_1u_2\cdots u_s}=\prod_{i=1}^{s}\mathbf{\Gamma}_{i,u_i}\mid u_i\in[1,r],i\in[1,s]\right\}$$ are linearly independent.

\vspace{0.2cm}
\textbf{Base case}: For $s=1$, we have
\begin{equation}\label{basematrix}
\Big\{\mathbf{\Delta}_{u_1}\mid u_1\in[1,r]\Big\}=\Big\{\mathbf{\Gamma}_{1,u_1}\mid u_1\in[1,r]\Big\}
=\left\{\mathbf{I},\sum_{j=1}^{\lambda}\mathbf{C}_{2,j},\cdots,
\sum_{j=1}^{\lambda}\mathbf{C}_{r,j}\right\}.
\end{equation}
Assume there exist $r$ coefficients $(a_1,\cdots,a_r)$ in $\mathbb{F}$ such that
\begin{equation}\label{base}
a_1\mathbf{I}+a_2\sum_{j=1}^{\lambda}\mathbf{C}_{2,j}+\cdots+a_r\sum_{j=1}^{\lambda}\mathbf{C}_{r,j}=0.
\end{equation}
Operating equation (\ref{base}) on $\mathbf{S}_i$ for any $i\in[1,\lambda]$, we have
\begin{equation}
\sum_{u=2}^r a_u\mathbf{S}_i\mathbf{C}_{u,i}+(\sum_{j\neq i}\sum_{u=2}^r a_u\mathbf{S}_i\mathbf{C}_{u,j}+a_1\mathbf{S}_i)=0,
\end{equation}
where $\sum_{j\neq i}\sum_{u=2}^r a_u\mathbf{S}_i\mathbf{C}_{u,j}+a_1\mathbf{S}_i\preceq\mathbf{S}_i$ following from Lemma \ref{premise1}. According to Lemma \ref{premise2}, we can similarly derive
\begin{equation}
a_1=a_2=\cdots=a_r=0,
\end{equation}
which means the $r$ matrices in (\ref{basematrix}) are linearly independent.

\vspace{0.3cm}
\textbf{Inductive step}: Let the inductive claim hold for some $s$, then it is true for $s+1$. Otherwise, we have
\begin{equation}
\mathbf{\Psi}_1^{(s)}+\mathbf{\Psi}_2^{(s)}\mathbf{\Gamma}_{s+1,2}+\cdots+\mathbf{\Psi}_r^{(s)}\mathbf{\Gamma}_{s+1,r}=0,
\end{equation}
where $\mathbf{\Psi}_u^{(s)}$ for each $u\in[1,r]$ are linear combinations of elements formed as $\mathbf{\Delta}_{u_1u_2\cdots u_s}$\footnote{As assumed in Lemma \ref{premise2}, we let $\mathbf{C}_{u,j}=\mathbf{I}$ when $u=1$ and thereby we can rewrite that $\mathbf{\Gamma}_{i,u}=\sum_{j\in \mathcal{X}_i}\mathbf{C}_{u,j}$ for any $u\in[1,r]$. In this case, $\mathbf{\Psi}_u^{(s)}$ can be explicitly expressed as $\mathbf{\Psi}_u^{(s)}=\sum_{\{u_1\in[1,r],\cdots,u_s\in[1,r]\}} d_{u_1u_2\cdots u_s}\mathbf{\Delta}_{u_1u_2\cdots u_s}=\sum d_{u_1u_2\cdots u_s}\prod_{i=1}^{s}\mathbf{\Gamma}_{i,u_i}=\sum d_{u_1u_2\cdots u_s}\prod_{i=1}^{s}\big\{\sum_{j\in \mathcal{X}_i}\mathbf{C}_{u_i,j}\big\}=\sum d_{u_1u_2\cdots u_s}\prod_{i=1}^{s}\big\{\sum_{j\in [1+(i-1)\lambda,\cdots,i\lambda]}\mathbf{C}_{u_i,j}\big\}=\sum_{\{u_1\in[1,r],\cdots,u_s\in[1,r]\}}d_{u_1u_2\cdots u_s}\sum_{\{j_1\in \mathcal{X}_1,j_2\in \mathcal{X}_2,\cdots,j_s\in \mathcal{X}_s\}}\mathbf{C}_{u_1,j_1}\mathbf{C}_{u_2,j_2}\cdots\mathbf{C}_{u_s,j_s}$, where some of the coefficients $d_{u_1u_2\cdots u_s}$ maybe equal to zero. Henceforth, each non-zero uniterm of $\mathbf{\Psi}_u^{(s)}$ is formed as $d_{u_1u_2\cdots u_s}\mathbf{C}_{u_1,j_1}\mathbf{C}_{u_2,j_2}\cdots\mathbf{C}_{u_s,j_s}$ for $j_{\tau}\in\mathcal{X}_{\tau}=[1+(\tau-1)\lambda,\tau\lambda]$ and $\tau\in[1,s]$. As a result, it is evident from Lemma \ref{premise1} that $\mathbf{S}_i\mathbf{C}_{u_1,j_1}\mathbf{C}_{u_2,j_2}\cdots\mathbf{C}_{u_s,j_s}\preceq\mathbf{S}_i$ for $i\in\mathcal{X}_{s+1}=[s\lambda+1,(s+1)\lambda]$, and therewith we further have that $\mathbf{S}_i\mathbf{\Psi}_u^{(s)}\preceq\mathbf{S}_i$.}.

Operating the above equation on $\mathbf{S}_i$ for each $i$ in $\mathcal{X}_{s+1}=[s\lambda+1,(s+1)\lambda]$, we have
\begin{equation}\label{equiv}
\mathbf{S}_i\mathbf{\Psi}_1^{(s)}+\mathbf{S}_i\mathbf{\Psi}_2^{(s)}(\sum_{j\in\mathcal{X}_{s+1}}\mathbf{C}_{2,j})+\cdots
+\mathbf{S}_i\mathbf{\Psi}_r^{(s)}(\sum_{j\in\mathcal{X}_{s+1}}\mathbf{C}_{r,j})=0,
\end{equation}
which can be rearranged to
$$
\mathbf{S}_i\Big\{\mathbf{\Psi}_1^{(s)}+\mathbf{\Psi}_2^{(s)}(\sum_{j\neq i}\mathbf{C}_{2,j})+\cdots
+\mathbf{\Psi}_r^{(s)}(\sum_{j\neq i}\mathbf{C}_{r,j})\Big\}+\mathbf{S}_i\mathbf{\Psi}_2^{(s)}\mathbf{C}_{2,i}+\cdots
+\mathbf{S}_i\mathbf{\Psi}_r^{(s)}\mathbf{C}_{r,i}=0.
$$
Similarly, Lemma \ref{premise1} leads to that
\begin{equation}
\left\{\begin{aligned}
&\mathbf{S}_i\Big\{\mathbf{\Psi}_1^{(s)}+\mathbf{\Psi}_2^{(s)}(\sum_{j\neq i}\mathbf{C}_{2,j})+\cdots+\mathbf{\Psi}_r^{(s)}(\sum_{j\neq i}\mathbf{C}_{r,j})\Big\}\preceq\mathbf{S}_i;\\
&\mathbf{S}_i\mathbf{\Psi}_{u}^{(s)}\mathbf{C}_{u,i}\preceq\mathbf{S}_i \mathbf{C}_{u,i}, ~~for~~u\in[2,r].
\end{aligned}\right.
\end{equation}

\vspace{0.2cm}
By Lemma \ref{premise2}, we know for each $u\in[2,r]$ and $i\in\mathcal{X}_{s+1}$,
\begin{equation}
\mathbf{S}_i\mathbf{\Psi}_{u}^{(s)}\mathbf{C}_{u,i}=0,
\end{equation}
from which we derive $\mathbf{S}_i\mathbf{\Psi}_{u}^{(s)}=0$ for $\mathbf{C}_{u,i}$ is invertible. Combining with equation (\ref{equiv}), we further know $\mathbf{S}_i\mathbf{\Psi}_1^{(s)}=0$. Thus, for each $u\in[1,r]$ and $i\in\mathcal{X}_{s+1}$, we have $\mathbf{S}_i\mathbf{\Psi}_{u}^{(s)}=0$.

Due to the condition that $\biguplus_{i\in\mathcal{X}_{s+1}}\mathbf{S}_{i}\simeq \mathbb{F}^l$, it has to be that $\mathbf{\Psi}_u^{(s)}=0$ for each $u\in[1,r]$, which contradicts the induction assumption.

\vspace{0.2cm}
\textbf{Conclusion}: That is to say, the $r^{\frac{k}{\lambda}}$ square matrices formed by $\mathbf{\Delta}_{u_1u_2\cdots u_{\frac{k}{\lambda}}}$ are linearly independent. Because they all lie in the $l^2$-dimensional space of matrices $\mathbb{F}^{l\times l}$, we have
\begin{equation}
r^{\frac{k}{\lambda}}\leq l^2\Leftrightarrow k\leq 2\lambda\log_rl.
\end{equation}

\textbf{2. The situation when $\lambda$ does not divide $k$.}
\vspace{0.2cm}

Assume $p\lambda<k<(p+1)\lambda$, i,e., $p=\left \lfloor\frac{k}{\lambda}\right \rfloor$ and $p+1=\left\lceil\frac{k}{\lambda}\right \rceil$. Then, divide $[1,k]$ into $p+1$ disjoint sets as $\mathcal{X}_1=[1,\cdots,k-p\lambda]$ and $\mathcal{X}_i=[k-(p+2-i)\lambda+1,\cdots,k-(p+1-i)\lambda]$, where $|\mathcal{X}_1|=k-p\lambda$ and $|\mathcal{X}_i|=\lambda$ for each $i\in[2,p+1]$. It should be noted here that $\mathcal{X}_1$ is not a standard partition and repair subspaces within it cannot span the whole space $\mathbb{F}^l$, because $|\mathcal{X}_1|<\lambda$.
\vspace{0.1cm}

As in the first situation, we only need to consider the base case, since we apply the same method of induction. In this case, by operating on $\mathbf{S}_i$ for any $i\in[1,k-p\lambda]$, it is trivial to verify that
\begin{equation}
\Big\{\mathbf{\Gamma}_{1,u_1}\mid u_1\in[1,r]\Big\}
=\left\{\mathbf{I},\sum_{j=1}^{k-p\lambda}\mathbf{C}_{2,j},\cdots,
\sum_{j=1}^{k-p\lambda}\mathbf{C}_{r,j}\right\}
\end{equation}
are also linearly independent. Thus, we totally obtain $r^{p+1}$ linearly independent square matrices
$$\left\{\mathbf{\Delta}_{u_1u_2\cdots u_{p+1}}=\prod_{i=1}^{p+1}\mathbf{\Gamma}_{i,u_i}\mid u_i\in[1,r],i\in[1,p+1]\right\}.$$

Therefore, we derive
\begin{equation}
r^{\left\lceil\frac{k}{\lambda}\right \rceil}\leq l^2\Rightarrow k< 2\lambda\log_rl.
\end{equation}$\hfill\blacksquare$

\end{proof}

\begin{rem}
In \cite{Re:S.Goparaju1}, Goparaju et al. derive that $k\leq 2\lambda\log_2l$, where $\lambda\leq\left\lfloor\log_{\frac{r}{r-1}}l\right \rfloor+1$. Here, we improve the upper bound from log-base $2$ to log-base $r$ and obtain that $k\leq2\lambda\log_rl$.

\end{rem}

In subsequent, we estimate the size of each standard partition $\lambda$, with which we will further give an explicit upper bound.

\subsection{An Explicit $\frac{r^2}{l}$-Dependent Bound}

In the following, we use our previous work \cite{Kun} to estimate the size of standard partition $\lambda$, combining which with Theorem \ref{log} we will further present an explicit bound depending on the value of $\frac{r^2}{l}$.

\subsubsection{3.3.1 Estimation of $\lambda$.}

In \cite{Re:S.Goparaju1}, Goparaju et al. apply the geometric analysis of the invariant subspace and derive that for any $m\in [1,k]$,
\begin{equation}\label{pre}
\dim(\biguplus_{i=1}^{m}\mathbf{S}_i)=\dim(\mathbf{S}_1\uplus\mathbf{S}_2\uplus\cdots\uplus\mathbf{S}_m)\geq \big(1-(\frac{r-1}{r})^{m}\big)l,
\end{equation}
from which they obtain that $\lambda\leq\left\lfloor\log_{\frac{r}{r-1}}l\right \rfloor+1$, i.e., if the size of a standard partition equals with $\left\lfloor\log_{\frac{r}{r-1}}l\right \rfloor+1$, the repair subspaces within it necessarily span the entire space $\mathbb{F}^l$. In addition, they find that formula (\ref{pre}) is useful for studying the secrecy capacity of linear MSR codes in \cite{Re:S.Goparaju}. We also focus on this secrecy issue in \cite{Kun} and ultimately present an explicit result on secrecy capacity in the linear MSR scenario that extends the result given in \cite{Re:S.Goparaju}. As shown in Theorem $6$ of \cite{Kun}, it closely depends on the value of $\beta$ that is equal to $\frac{l}{r}$ herein. The following theorem can be looked upon as a direct corollary of formula $(72)$ of Theorem $6$ given in \cite{Kun}.

\begin{theorem}\label{class}
Given an $\{n=k+r,k,l\}$ linear systematic-repair MSR code with independent repair subspaces, for any set $F\subsetneq[1,k]$ with $|F|=m$, we have
\begin{equation}\label{part}
\dim(\biguplus_{i\in F}\mathbf{S}_i):\left\{\begin{array}{ll}
=m\frac{l}{r}, &\textrm{if} \quad m\leq t;\\
\geq t\frac{l}{r}+\frac{l}{r}(r-t)[1-(\frac{r-1}{r})^e]= l-\frac{l}{r}(r-t)(\frac{r-1}{r})^{e}, &\textrm{if} \quad m=t+e,
\end{array}\right.
\end{equation}
where $t=\lceil\frac{r^2}{l}\rceil$ and $e\geq 1$.
\end{theorem}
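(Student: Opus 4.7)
The plan is to prove both regimes by induction, using an intersection lemma that refines the argument behind Goparaju et al.'s bound (\ref{pre}). The key tool I would first establish is that for any $F' \subseteq [1,k]$ and any $i \notin F'$,
\[
\dim\!\Big(\mathbf{S}_i \cap \biguplus_{j \in F'} \mathbf{S}_j\Big) \leq \frac{1}{r}\dim\!\Big(\biguplus_{j \in F'} \mathbf{S}_j\Big).
\]
The argument goes as follows. Since $\mathbf{S}_j \mathbf{C}_{u,i} \simeq \mathbf{S}_j$ for every $j \neq i$ by the first item of (\ref{interference}), the subspace $A := \biguplus_{j \in F'} \mathbf{S}_j$ is invariant under every $\mathbf{C}_{u,i}$. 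Right-multiplying by $\mathbf{C}_{u,i}$ therefore induces an isomorphism $A \cap \mathbf{S}_i \simeq A \cap (\mathbf{S}_i \mathbf{C}_{u,i})$ for each $u \in [1,r]$. Because $\bigoplus_{u=1}^r \mathbf{S}_i \mathbf{C}_{u,i} \simeq \mathbb{F}^l$, the $r$ intersections $\{A \cap (\mathbf{S}_i \mathbf{C}_{u,i})\}_{u=1}^r$ are linearly independent subspaces of $A$, so summing dimensions yields $r \dim(A \cap \mathbf{S}_i) \leq \dim(A)$; equivalently, $\dim(\mathbf{S}_i \uplus A) \geq \tfrac{r-1}{r}\dim(A) + \tfrac{l}{r}$.

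For Case 1 ($m \leq t$), I would induct on $m$, with the base $m=1$ immediate from $\dim(\mathbf{S}_1)=l/r$. For the step, pick any $i \in F$ and set $F' = F \setminus \{i\}$; by the induction hypothesis $\dim(A) = (m-1)l/r$, so the intersection lemma gives $\dim(\mathbf{S}_i \cap A) \leq (m-1)l/r^2$. The definition $t = \lceil r^2/l \rceil$ guarantees $t-1 < r^2/l$, so $m \leq t$ forces $(m-1)l/r^2 < 1$; since subspace dimensions are integers, this drives $\dim(\mathbf{S}_i \cap A) = 0$. The sum is therefore direct, giving $\dim(\biguplus_{j \in F}\mathbf{S}_j) = m l/r$.

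For Case 2 ($m = t+e$, $e \geq 1$), I would induct on $e$. The base $e = 0$ is Case 1 applied to any $t$-element subset, yielding the lower bound $L_0 := t l/r$. For the step, choose any $i \in F$ and apply the hypothesis to $F \setminus \{i\}$ (of size $t+e-1$) to obtain $\dim(\biguplus_{F \setminus \{i\}} \mathbf{S}_j) \geq L_{e-1}$; the intersection lemma then yields
\[
\dim\!\Big(\biguplus_{j \in F} \mathbf{S}_j\Big) \geq \tfrac{r-1}{r} L_{e-1} + \tfrac{l}{r}.
\]
A direct algebraic check on the closed form $L_e = l - \tfrac{l}{r}(r-t)\big(\tfrac{r-1}{r}\big)^e$ verifies that $\tfrac{r-1}{r} L_{e-1} + \tfrac{l}{r} = L_e$, closing the induction and matching the statement of the theorem.

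The main obstacle I anticipate is the arithmetic bookkeeping at the transition between the two regimes, specifically verifying that the ceiling definition of $t$ aligns exactly with the threshold at which $(m-1)l/r^2$ ceases to be strictly less than $1$. This reduces to the identity $t-1 < r^2/l \leq t$, which I would split carefully into the cases $r^2/l \in \mathbb{Z}$ and $r^2/l \notin \mathbb{Z}$ so that both the boundary behaviour at $m = t$ and the strict-inequality behaviour for $m < t$ are captured. Once this is settled, Case 2 is a clean re-indexing of the Goparaju et al.\ recursion, shifted so that the base case is the direct-sum regime $m = t$ rather than $m = 0$, and re-expressed in terms of the excess exponent $e$; this also matches the specialisation to formula (\ref{pre}) when $t = 1$.
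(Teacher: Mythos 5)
Your proposal is correct, but it takes a genuinely different route from the paper. The paper does not prove Theorem \ref{class} from first principles: its proof consists of citing formula $(72)$ of Theorem $6$ in the authors' prior work \cite{Kun} on secrecy capacity and translating parameters ($n=d+1$, $l=\alpha$, $\frac{l}{r}=\beta$), together with the observation that the condition $\frac{r}{t}\leq\frac{l}{r}<\frac{r}{t-1}$ is equivalent to $t=\lceil\frac{r^2}{l}\rceil$. You instead give a self-contained derivation from the subspace conditions (\ref{interference}): the invariance of $A=\biguplus_{j\in F'}\mathbf{S}_j$ under each $\mathbf{C}_{u,i}$, combined with $\bigoplus_{u=1}^{r}\mathbf{S}_i\mathbf{C}_{u,i}\simeq\mathbb{F}^l$, gives independent subspaces $A\cap\mathbf{S}_i\mathbf{C}_{u,i}$ of equal dimension inside $A$ and hence $r\dim(A\cap\mathbf{S}_i)\leq\dim(A)$; the two inductions then go through. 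The arithmetic you flag as the main obstacle does work out: $m\leq t=\lceil r^2/l\rceil$ forces $(m-1)l/r^2<1$ whether or not $r^2/l$ is an integer, so the intersection is zero and the sum is direct in the first regime, and the recursion $L_e=\frac{r-1}{r}L_{e-1}+\frac{l}{r}$ reproduces the closed form $l-\frac{l}{r}(r-t)(\frac{r-1}{r})^e$ in the second. What your approach buys is a proof readable within this paper without consulting \cite{Kun} (and it makes transparent that the bound is the Goparaju--Tamo--Calderbank recursion restarted at the direct-sum threshold $m=t$); what the paper's citation buys is brevity and consistency with the more general $\beta$-parameterized statement established there. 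A minor remark: your argument never actually uses the hypothesis $F\subsetneq[1,k]$, which is simply inherited from the formulation in \cite{Kun}.
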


\begin{proof}

As stated in Remark $11$ of \cite{Kun}, all results therein in fact also apply to MSR codes with optimal repair of systematic nodes only, although the general case for MSR codes with optimal repair of all nodes is the study object in \cite{Kun}. In this paper, we let a linear systematic-repair MSR code represented by $\{n=k+r,k,l\}$, where $n=d+1$, $l=\alpha$ and $\frac{l}{r}=\beta$. Through changing the expression of related parameters, formula $(72)$ of Theorem $6$ in \cite{Kun} can be equivalently transformed into as follows
\begin{equation}\label{part}
\dim(\biguplus_{i\in F}\mathbf{S}_i):\left\{\begin{array}{ll}
=m\frac{l}{r}, & if \quad m\leq t, \frac{l}{r}<\frac{r}{t-1};\\
\geq t\frac{l}{r}+\frac{l}{r}(r-t)[1-(\frac{r-1}{r})^e]= l-\frac{l}{r}(r-t)(\frac{r-1}{r})^{e}, & if \quad m=t+e, \frac{r}{t}\leq \frac{l}{r}<\frac{r}{t-1},
\end{array}\right.
\end{equation}
where $1\leq t\leq r$ and $e\geq 1$. Since $t$ is an integer satisfying the condition that $\Big\{\frac{r}{t}\leq \frac{l}{r}<\frac{r}{t-1}\Leftrightarrow\frac{r^2}{l}\leq t<\frac{r^2}{l}+1\Big\}$, we can rewrite it as $t=\lceil\frac{r^2}{l}\rceil$, where it is trivial that $1\leq t\leq r$.$\hfill\blacksquare$

\end{proof}

\begin{rem}\label{verify}
It has been verified in \cite{Kun} that,
\begin{equation}
\left\{\begin{aligned}
&t\frac{l}{r}+\frac{l}{r}(r-t)[1-(\frac{r-1}{r})^e]<t\frac{l}{r}+\frac{l}{r}(r-t)[e(1-\frac{r-1}{r})]
=t\frac{l}{r}+e\frac{l}{r}(1-\frac{t}{r})<(t+e)\frac{l}{r}=m\frac{l}{r};\\
&t\frac{l}{r}+\frac{l}{r}(r-t)[1-(\frac{r-1}{r})^e]=l[1-(\frac{r-t}{r})(\frac{r-1}{r})^e]
\geq l[1-(\frac{r-1}{r})^t(\frac{r-1}{r})^e]=[1-(\frac{r-1}{r})^{m}]l,\\
\end{aligned}\right.
\end{equation}
which implies that when $\dim(\biguplus_{i\in F}\mathbf{S}_i)\leq l$,
\begin{equation}\label{compare}
[1-(\frac{r-1}{r})^{m}]l\leq\dim(\biguplus_{i\in F}\mathbf{S}_i)\leq m\frac{l}{r}.
\end{equation}

As assumed in \cite{Re:S.Goparaju1}, we also make hypothesis that $\lambda< k$, i.e., there exists at least one standard partition properly included in $[1,k]$. Otherwise, we have $k\leq\lambda$. Next, we apply Theorem \ref{class} to estimate $\lambda$, the smallest value of $m$ satisfying $\dim(\biguplus_{i\in F}\mathbf{S}_i)=l$.

\end{rem}

\begin{theorem}\label{partition}
Given an $\{n=k+r,k,l\}$ linear systematic-repair MSR code with independent repair subspaces,
we have
\begin{equation}
\lambda:\left\{\begin{array}{ll}
=r, &\textrm{if} \quad t=r;\\
\leq t+1+\left \lfloor\log_{\frac{r}{r-1}}\frac{(r-t)l}{r}\right \rfloor, &\textrm{if} \quad 1\leq t<r.
\end{array}\right.
\end{equation}
where $t=\lceil\frac{r^2}{l}\rceil$.
\end{theorem}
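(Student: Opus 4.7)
The plan is to pick, for each value of $t$, the smallest cardinality $m$ for which Theorem \ref{class} already certifies $\dim\bigl(\biguplus_{i\in F}\mathbf{S}_{i}\bigr)=l$ for every size-$m$ subset $F\subsetneq[1,k]$; any such $m$ is automatically an upper bound on $\lambda$, and when a matching lower bound is available it pins $\lambda$ down exactly. The entire argument therefore reduces to a direct analysis of the two branches in Theorem \ref{class}, together with the elementary fact that $\dim$ is a nonnegative integer bounded above by $l$.

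For the boundary case $t=r$, the equality branch of Theorem \ref{class} gives $\dim\bigl(\biguplus_{i\in F}\mathbf{S}_{i}\bigr)=m\cdot l/r$ for every $m\leq t=r$. Setting $m=r$ yields dimension exactly $l$, so $\lambda\leq r$. Conversely, for any $m<r$ the same equality gives dimension $m\cdot l/r<l$ (using $l/r\geq 1$), so no smaller value of $m$ suffices; this gives the matching bound $\lambda\geq r$ and hence $\lambda=r$.

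For $1\leq t<r$, I first observe that for $m\leq t$ the equality branch yields $\dim\leq t\cdot l/r<l$ (since $t<r$), so $\lambda>t$. Writing $m=t+e$ with $e\geq 1$, the inequality branch of Theorem \ref{class} supplies
$$\dim\Bigl(\biguplus_{i\in F}\mathbf{S}_{i}\Bigr)\;\geq\; l-\frac{l}{r}(r-t)\Bigl(\frac{r-1}{r}\Bigr)^{e}.$$
Since dimension is a nonnegative integer capped at $l$, it is enough to force the right-hand side to exceed $l-1$, i.e.\ to require $\frac{l}{r}(r-t)\bigl(\frac{r-1}{r}\bigr)^{e}<1$. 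Taking $\log_{\frac{r}{r-1}}$ of both sides rewrites this as $e>\log_{\frac{r}{r-1}}\frac{(r-t)l}{r}$, whose smallest positive-integer solution is $e=\lfloor\log_{\frac{r}{r-1}}\frac{(r-t)l}{r}\rfloor+1$. Plugging back in gives $\lambda\leq t+e=t+1+\lfloor\log_{\frac{r}{r-1}}\frac{(r-t)l}{r}\rfloor$, as claimed.

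The only genuinely delicate step is this integrality leap: Theorem \ref{class} never literally drives its lower bound up to $l$ for any finite $e$, but once the gap drops strictly below $1$ the integrality of dimension together with the trivial cap $\dim\leq l$ converts the fractional gap into exact equality $\dim=l$. A minor sanity check is also worth doing in passing: the hypothesis $t<r$ forces $l\geq r^{2}/(r-1)>r$, which in turn forces $(r-t)l/r>1$, so the logarithm is positive and the formula indeed produces a valid $e\geq 1$. Apart from this small observation, everything else is a routine logarithmic manipulation and the casework on $t$ is dictated entirely by which branch of Theorem \ref{class} one is in.
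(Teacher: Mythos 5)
Your proof is correct and takes essentially the same approach as the paper: each case is handled by the corresponding branch of Theorem \ref{class}, and the condition $\frac{l}{r}(r-t)\bigl(\frac{r-1}{r}\bigr)^{e}<1$, i.e.\ $e>\log_{\frac{r}{r-1}}\frac{(r-t)l}{r}$, forces $\dim=l$. You merely spell out two points the paper leaves implicit, namely the integrality-plus-cap argument that converts a gap smaller than $1$ into exact equality, and the matching lower bound $\lambda\geq r$ in the case $t=r$.
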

\begin{proof}

There are two cases analyzed as follows.
\vspace{0.2cm}

\textbf{1. The case when $t=r$.}

According to the first item of Theorem \ref{class}, we know when $m=r$, it is clear that $\dim(\biguplus_{i\in F}\mathbf{S}_i)=l$. Thus, we have $\lambda=r$.
\vspace{0.2cm}

\textbf{2. The case when $1\leq t<r$.}

According to the second item of Theorem \ref{class}, we know when $m=t+e$, then $\dim(\biguplus_{i\in F}\mathbf{S}_i)\geq l-\frac{l}{r}(r-t)(\frac{r-1}{r})^{e}$. Hence, when $\frac{l}{r}(r-t)(\frac{r-1}{r})^{e}<1$ or $e>\log_{\frac{r}{r-1}}\frac{(r-t)l}{r}$, it has to be that $\dim(\biguplus_{i\in F}\mathbf{S}_i)=l$. Thus, we obtain that $\lambda\leq t+e=t+1+\left \lfloor\log_{\frac{r}{r-1}}\frac{(r-t)l}{r}\right \rfloor$.$\hfill\blacksquare$
\end{proof}

\begin{rem}

From Theorem \ref{partition}, we know when $t=1$, it is clear that $\lambda\leq2+\left \lfloor\log_{\frac{r}{r-1}}\frac{(r-1)l}{r}\right \rfloor=2+\left \lfloor\log_{\frac{r}{r-1}}l-1\right \rfloor=1+\left \lfloor\log_{\frac{r}{r-1}}l\right \rfloor$, which is consistent with the corresponding result in \cite{Re:S.Goparaju1}. Besides, one can check that, for $1<t<r$,
\begin{equation}
r<t+1+\left \lfloor\log_{\frac{r}{r-1}}\frac{(r-t)l}{r}\right \rfloor<1+\left \lfloor\log_{\frac{r}{r-1}}l\right \rfloor,
\end{equation}
which basically corresponds to the inequality (\ref{compare}).
\end{rem}
\vspace{0.1cm}

\subsubsection{3.3.2 Final Explicit Result.}

Eventually, combining Theorem \ref{log} and Theorem \ref{partition}, we derive the final explicit bound as follows.

\begin{theorem}\label{final}
For any $\{n=k+r,k,l\}$ linear systematic-repair MSR code with independent repair subspaces, the following upper bound holds
\begin{equation}
k\leq 2\lambda\log_rl,
\end{equation}
wherein
\begin{equation}
\lambda:\left\{\begin{array}{ll}
=r, &\textrm{when} \quad t=r;\\
\leq t+1+\left \lfloor\log_{\frac{r}{r-1}}\frac{(r-t)l}{r}\right \rfloor, &\textrm{when} \quad 1\leq t<r,
\end{array}\right.
\end{equation}
where $t=\lceil\frac{r^2}{l}\rceil$.
\end{theorem}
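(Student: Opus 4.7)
The plan is to treat Theorem \ref{final} as a direct synthesis of the two preceding results, Theorem \ref{log} and Theorem \ref{partition}, so the proof should not require any new machinery but only a clean invocation of both. First, I would recall that Theorem \ref{log} establishes the $r$-based-log inequality $k\leq 2\lambda\log_r l$, where $\lambda$ is the partition size defined in Footnote 4, i.e., the smallest integer such that every collection of $\lambda$ repair subspaces spans $\mathbb{F}^l$. Crucially, this inequality is valid for \emph{any} upper estimate of $\lambda$, so to obtain an explicit bound in terms of $r$ and $l$ alone, it suffices to plug in an explicit estimate for $\lambda$.

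Next, I would invoke Theorem \ref{partition}, which provides precisely such an estimate, broken into two regimes according to the value of $t=\lceil r^{2}/l\rceil$: in the boundary case $t=r$ (equivalently $l\leq r$), the first item of Theorem \ref{class} forces $\dim(\biguplus_{i\in F}\mathbf{S}_{i})=l$ as soon as $|F|=r$, giving $\lambda=r$ exactly; in the complementary range $1\leq t<r$, the second item of Theorem \ref{class} guarantees that $\dim(\biguplus_{i\in F}\mathbf{S}_{i})\geq l-\tfrac{l}{r}(r-t)\bigl(\tfrac{r-1}{r}\bigr)^{e}$ with $m=t+e$, so picking $e$ large enough that $\tfrac{l}{r}(r-t)\bigl(\tfrac{r-1}{r}\bigr)^{e}<1$ yields $\lambda\leq t+1+\bigl\lfloor\log_{r/(r-1)}\tfrac{(r-t)l}{r}\bigr\rfloor$.

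The final step is purely mechanical: I would substitute these two cases for $\lambda$ into $k\leq 2\lambda\log_{r}l$, producing exactly the two-branch statement claimed in Theorem \ref{final}. I do not anticipate any obstacle here, since no new subspace geometry or linear-independence argument is introduced; the only thing one must verify, for completeness, is that the implicit assumption $\lambda<k$ made in Remark \ref{verify} is harmless, because otherwise $k\leq\lambda$ is itself at least as strong as the target bound. In short, the entire proof is one sentence of bookkeeping, and I would write it as: ``Combining Theorem \ref{log} with Theorem \ref{partition} immediately yields the claim.''
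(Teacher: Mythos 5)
Your proposal is correct and matches the paper exactly: the paper states Theorem \ref{final} with no separate proof, presenting it as the immediate combination of Theorem \ref{log} (which gives $k\leq 2\lambda\log_r l$ for the partition size $\lambda$) with the two-case estimate of $\lambda$ from Theorem \ref{partition}. Your added remark about the implicit assumption $\lambda<k$ is consistent with the paper's own discussion in Remark \ref{verify} and Section 3.4.
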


\begin{rem}

Through literature survey, MSR codes for $\{d=n-1\}$ known so far are generally divided into scalar MSR codes with $\{l=r\}$ and vector MSR codes with $\{l=r^{x+1},x\geq1\}$, which correspond to $t=r$ and $t=1$ respectively. As stated in \cite{Kun}, it remains open for constructing vector MSR codes with $\{r<l<r^2\}$.

When $l=r$, Theorem \ref{scalar} implies that $k=r+1$ is the longest systematic-length for the scalar linear systematic-repair MSR codes with independent repair subspaces. According to Theorem \ref{final}, we know that when $l=r$, it has to be that $\lambda=t=r$, following which we have $k\leq 2r$. Essentially, when $k=r+1$, it is evident that $\lambda=r$ does not divide $k$ and thus it should be that $k<2r$ following from Theorem \ref{log}. The reason is that, for $k=r+1$, $k$ systematic nodes can only be split into one standard partition as well as an nonstandard partition, where the size of the nonstandard partition equals with $1$ that is smaller than $r$. In some sense, the $r$-based-log upper bound basically coincides with the real largest number of systematic nodes in the scalar scenario. In other words, the longest systematic-length $k=r+1$ is indeed included in the actual case of $k<2r$ derived from the $r$-based-log upper bound for $l=r$.

When $l\geq r^2$, it is clear that $t=1$. For example in Table \ref{Tab:Comparision}, the systematic-repair MSR code with the longest systematic-length known so far given in \cite{Re:Z.Wang2} is with $\big\{k=(r+1)\log_rl\big\}$. However, Theorem \ref{final} leads to that $k\leq 2\log_rl\big(1+\left \lfloor\log_{\frac{r}{r-1}}l\right \rfloor\big)$. By comparison, we find that $r+1$ is strictly less than $2\big(1+\left \lfloor\log_{\frac{r}{r-1}}l\right \rfloor\big)$ even for $l=r^2$ and $r=2$, which means that the code in \cite{Re:Z.Wang2} cannot reach the derived upper bound. That is to say, although we do improve the upper bound from previous log-base $2$ to current log-base $r$, it is still unknown yet whether the upper bound in Theorem \ref{final} can be achieved for some unexplored vector linear systematic-repair MSR code. Nevertheless, our $r$-based-log upper bound is nearly in line with the conjecture that $k$ is of the order of $\log_rl$ proposed by Tamo et al. in \cite{Re:I.Tamo1}.
\end{rem}

\subsection{Further Discussions}

As stated in Remark \ref{verify}, our Theorem \ref{class}, Theorem \ref{partition} and Theorem \ref{final} all are based on the assumption $\lambda<k$, i.e., systematic part at least properly includes one standard partition, while Theorem \ref{log} also applies to the case of $k=\lambda$ as the systematic part $[1,k]$ exactly constitutes a standard partition. Thereby, the situation $k\leq\lambda$ is left open.

\vspace{0.1cm}

When $k=\lambda$, it is clear that $k\leq 2\lambda\log_rl$ from Theorem \ref{log}, since it is trivial that $r\leq l$ for $r$ divides $l$. When $k<\lambda$, the systematic part cannot satisfy the condition for the formation of a standard partition, which implies the notation $\lambda$ is meaningless at present. In this case, we cannot obtain the log-based upper bound similarly. Hence, we put forward two questions as follows.

\begin{question}\label{1}
\emph{What value of $l$ and $r$ taken can we make the largest number of systematic node $k$ equal with $\lambda$ in the linear systematic-repair MSR scenario? What is the exact value of $k$ for this case?}
\end{question}

Technically, under the condition (\ref{interference}), we assume $k_{\{l,r\}}$ is the longest systematic-length for given $l$ and $r$. Then, what values of $l$ and $r$ taken will make $k_{\{l,r\}}$ satisfy

\begin{equation}
\left\{\begin{aligned}
&\dim(\biguplus_{i=1}^{k_{\{l,r\}}} \mathbf{S}_{i})=l;\\
&\dim(\biguplus_{j=1}^{k_{\{l,r\}}-1} \mathbf{S}_{i_j})<l, for~ any~ distinct~~ i_j\in[1,k_{\{l,r\}}],
\end{aligned}\right.
\end{equation}
where $\dim(\mathbf{S}_{i})=\frac{l}{r}$ for each $i\in[1,k_{\{l,r\}}]$.

\begin{question}\label{2}
\emph{What value of $l$ and $r$ taken will make the systematic part cannot form a standard partition in the linear systematic-repair MSR scenario? What is the largest number of systematic nodes $k$ now?}
\end{question}

With the same definition of $k_{\{l,r\}}$ as above, what range of values for $l$ and $r$ will make $k_{\{l,r\}}$ satisfy

\begin{equation}
\dim(\biguplus_{i=1}^{k_{\{l,r\}}} \mathbf{S}_{i})<l,
\end{equation}
where $\dim(\mathbf{S}_{i})=\frac{l}{r}$ for each $i\in[1,k_{\{l,r\}}]$.

\begin{rem}
Question \ref{1} and Question \ref{2} are our next direction of study. It may help further improve the upper bound on systematic-length, since our $r$-based-log upper bound are built on the assumption $\dim(\biguplus_{i=1}^{k_{\{l,r\}}}\mathbf{S}_{i})>l$. The intuition establishes on the fact that the length $k_{\{l,r\}}$ of the systematic part herein is not larger than $\lambda$ the least number of systematic nodes for supporting a standard partition.

\end{rem}

\section{CONCLUSION}

Finding the exact systematic-length upper bound for a systematic-repair MSR code with $r$ parity nodes and storage capacity $l$ is an open problem. Following the method of geometric analysis on a set of subspaces and operators, we formalize two helpful derivative properties of subspaces under the case of arbitrary number of parity nodes. With them, we first demonstrate the linear independence among all encoding matrices and then design new linearly independent matrices based on the idea of partition, which naturally bound the systematic-length $k$. Finally, we derive a simple quadratic bound and an $r$-based-log bound, which both are superior to the previous results. Moreover, leveraging our prior work \cite{Kun}, we estimate the size of a standard partition and further present an explicit bound depending on the value of $\frac{r^2}{l}$.

\section{APPENDIX}

\begin{corollary}
In Theorem \ref{log}, we define
$$\mathbf{\Delta}_{u_1u_2\cdots u_{\frac{k}{\lambda}}}=\prod_{i=1}^{\frac{k}{\lambda}}\mathbf{\Gamma}_{i,u_i},$$
where
\begin{equation}
\mathbf{\Gamma}_{i,u}=\left\{\begin{aligned}
&\mathbf{I},\quad if ~ u=1;\\
&\sum_{j\in \mathcal{X}_i}\mathbf{C}_{u,j}, \quad if~ u\in[2,r],
\end{aligned}\right.
\end{equation}
for $u_i\in[1,r]$ and $i\in[1,\frac{k}{\lambda}]$. In fact, all of these $r^{\frac{k}{\lambda}}$ matrices are non-zero.
\end{corollary}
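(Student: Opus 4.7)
The plan is to prove the corollary by induction on $s$, where $\mathbf{D}_s := \prod_{i=1}^{s}\mathbf{\Gamma}_{i,u_i}$, showing that $\mathbf{D}_s \neq 0$ for every fixed tuple $(u_1,\dots,u_s)\in[1,r]^s$ and every $s\in[1,k/\lambda]$. The base case $s=1$ with $u_1=1$ is trivial; for $u_1\geq 2$ it falls out of the same contradiction argument used in the inductive step below (with the convention $\mathbf{D}_0=\mathbf{I}$).

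For the inductive step, assume $\mathbf{D}_s\neq 0$ and argue $\mathbf{D}_{s+1}\neq 0$. When $u_{s+1}=1$ the claim is immediate since $\mathbf{\Gamma}_{s+1,1}=\mathbf{I}$. When $u_{s+1}\geq 2$, suppose for contradiction that $\mathbf{D}_{s+1}=\mathbf{D}_s\mathbf{\Gamma}_{s+1,u_{s+1}}=0$, and pick any $i\in\mathcal{X}_{s+1}$. The pivotal observation is that
\[
\mathbf{S}_i\mathbf{D}_s \preceq \mathbf{S}_i.
\]
To see this, expand $\mathbf{D}_s$ as a linear combination of products $\prod_{\tau=1}^{s}\mathbf{C}_{u_\tau,j_\tau}$ with each $j_\tau\in\mathcal{X}_\tau$; since the partitions are disjoint and $i\in\mathcal{X}_{s+1}$, every $j_\tau$ differs from $i$, and iterating Lemma \ref{premise1} yields $\mathbf{S}_i\prod_\tau\mathbf{C}_{u_\tau,j_\tau}\preceq\mathbf{S}_i$.

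Using this, I decompose
\[
\mathbf{S}_i\mathbf{D}_{s+1} \;=\; \mathbf{S}_i\mathbf{D}_s\mathbf{C}_{u_{s+1},i} \;+\; \sum_{j\in\mathcal{X}_{s+1}\setminus\{i\}}\mathbf{S}_i\mathbf{D}_s\mathbf{C}_{u_{s+1},j}.
\]
The second sum lies in $\mathbf{S}_i$ (combine $\mathbf{S}_i\mathbf{D}_s\preceq\mathbf{S}_i$ with Lemma \ref{premise1} applied to the factor $\mathbf{C}_{u_{s+1},j}$ for $j\neq i$), while the first term lies in $\mathbf{S}_i\mathbf{C}_{u_{s+1},i}$. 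Lemma \ref{premise2} then forces the component in $\mathbf{S}_i\mathbf{C}_{u_{s+1},i}$ to vanish, so $\mathbf{S}_i\mathbf{D}_s\mathbf{C}_{u_{s+1},i}=0$, and invertibility of $\mathbf{C}_{u_{s+1},i}$ gives $\mathbf{S}_i\mathbf{D}_s=0$. This holds for every $i\in\mathcal{X}_{s+1}$; since $\mathcal{X}_{s+1}$ is a standard partition, $\biguplus_{i\in\mathcal{X}_{s+1}}\mathbf{S}_i\simeq\mathbb{F}^l$, hence $\mathbf{D}_s=0$, contradicting the inductive hypothesis.

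The main technical hurdle is justifying the pivotal observation $\mathbf{S}_i\mathbf{D}_s\preceq\mathbf{S}_i$: it requires expanding every $\mathbf{\Gamma}_{\tau,u_\tau}$ into its summands $\mathbf{C}_{u_\tau,j_\tau}$ and verifying that no $j_\tau$ collides with the chosen $i\in\mathcal{X}_{s+1}$, which is precisely where disjointness of the partitions is used. Once this is in place, the argument is structurally parallel to the linear-independence step in Theorem \ref{log}: Lemma \ref{premise2} and the partition-spanning property finish the induction.
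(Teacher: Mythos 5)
Your proof is correct and follows essentially the same route as the paper's appendix argument: induction on $s$, operating the hypothesized zero product on $\mathbf{S}_i$ for $i\in\mathcal{X}_{s+1}$, isolating the $\mathbf{C}_{u_{s+1},i}$ term, and invoking Lemma \ref{premise1}, Lemma \ref{premise2}, invertibility, and the spanning property of a standard partition. The only cosmetic difference is the base case for $u_1\in[2,r]$, which the paper settles by citing the linear independence from Theorem \ref{scalar} while you fold it into the same contradiction mechanism via $\mathbf{D}_0=\mathbf{I}$; both are valid, and your explicit justification of $\mathbf{S}_i\mathbf{D}_s\preceq\mathbf{S}_i$ via disjointness of the partitions is a point the paper leaves implicit.
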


\begin{proof}
We still employ the proof of contradiction.
\vspace{0.2cm}

\textbf{Induction claim}: Assume for some $s\in [1,\frac{k}{\lambda}]$, the following square matrix
$$\mathbf{\Delta}_{u_1u_2\cdots u_s}=\prod_{i=1}^{s}\mathbf{\Gamma}_{i,u_i}\neq 0,$$
where $u_i\in[1,r]$.

\vspace{0.2cm}
\textbf{Base case}: For $s=1$, we have
$$\mathbf{\Delta}_{u_1}=\mathbf{\Gamma}_{1,u_1},$$
where it is clear from the definition that $\mathbf{\Gamma}_{1,u_1}=\mathbf{I}\neq 0$ if $u_1=1$.

\vspace{0.2cm}

If $u_1\in[2,r]$, $\mathbf{\Gamma}_{1,u_1}=\sum_{j\in \mathcal{X}_1}\mathbf{C}_{u_1,j}$. According to Theorem \ref{scalar}, we know that the matrices $\{\mathbf{C}_{u_1,j}\mid j\in \mathcal{X}_1\}$ are linearly independent, from which it is apparent that $\mathbf{\Gamma}_{1,u_1}=\sum_{j\in \mathcal{X}_1}\mathbf{C}_{u_1,j}\neq 0$ for any $u_1\in[2,r]$.

\vspace{0.2cm}
\textbf{Inductive step}: Let the inductive claim hold for some $s$, then it is true for $s+1$. Otherwise, we have
\begin{equation}
\mathbf{\Delta}_{u_1u_2\cdots u_{s+1}}=\prod_{i=1}^{s+1}\mathbf{\Gamma}_{i,u_i}=0.
\end{equation}

1. When $u_{s+1}=1$, we have $\mathbf{\Gamma}_{s+1,1}=\mathbf{I}$. Thus, it has to be that $\mathbf{\Delta}_{u_1u_2\cdots u_{s+1}}=\mathbf{\Delta}_{u_1u_2\cdots u_{s}}=0$, which contradicts the induction assumption.
\vspace{0.2cm}

2. When $u_{s+1}\in[2,r]$, $\mathbf{\Gamma}_{s+1,u_{s+1}}=\sum_{j\in \mathcal{X}_{s+1}}\mathbf{C}_{u_{s+1},j}$, with which we have
\begin{equation}
\mathbf{\Delta}_{u_1u_2\cdots u_{s+1}}=\mathbf{\Delta}_{u_1u_2\cdots u_{s}}(\sum_{j\in \mathcal{X}_{s+1}}\mathbf{C}_{u_{s+1},j})=0.
\end{equation}
Operating the above equation on $\mathbf{S}_i$ for each $i$ in $\mathcal{X}_{s+1}=[s\lambda+1,(s+1)\lambda]$, we obtain
\begin{equation}
\mathbf{S}_i\mathbf{\Delta}_{u_1u_2\cdots u_{s}}\mathbf{C}_{u_{s+1},i}+\mathbf{S}_i\mathbf{\Delta}_{u_1u_2\cdots u_{s}}(\sum_{j\neq i}\mathbf{C}_{u_{s+1},j})=0,
\end{equation}
where $\mathbf{S}_i\mathbf{\Delta}_{u_1u_2\cdots u_{s}}\mathbf{C}_{u_{s+1},i}\preceq\mathbf{S}_i \mathbf{C}_{u_{s+1},i}$ and $\mathbf{S}_i\mathbf{\Delta}_{u_1u_2\cdots u_{s}}(\sum_{j\neq i}\mathbf{C}_{u_{s+1},j})\preceq\mathbf{S}_i$ from Lemma \ref{premise1}.

\vspace{0.2cm}
Further by Lemma \ref{premise2}, we derive, for each $i\in\mathcal{X}_{s+1}$,
\begin{equation}
\mathbf{S}_i\mathbf{\Delta}_{u_1u_2\cdots u_{s}}\mathbf{C}_{u_{s+1},i}=0,
\end{equation}
from which we have $\mathbf{S}_i\mathbf{\Delta}_{u_1u_2\cdots u_{s}}=0$, since $\mathbf{C}_{u_{s+1},i}$ is invertible. Under the condition $\biguplus_{i\in\mathcal{X}_{s+1}}\mathbf{S}_{i}\simeq \mathbb{F}^l$, it must be that $\mathbf{\Delta}_{u_1u_2\cdots u_{s}}=0$, which also contradicts the induction assumption.
\end{proof}

\vspace{0.2cm}
\textbf{Conclusion}: That is to say, any square matrix formed as $\mathbf{\Delta}_{u_1u_2\cdots u_{\frac{k}{\lambda}}}$ is non-zero.$\hfill\blacksquare$

\end{document}